\newcommand{\blind}{0}
\DeclarePairedDelimiterX{\norm}[1]{\lVert}{\rVert}{#1}   
\newcommand{\utwi}[1]{\mbox{\boldmath $ #1$}}
\newcommand{\ba}{{\utwi{a}}}
\newcommand{\bff}{{\utwi{f}}}
\newcommand{\bh}{{\utwi{h}}}
\newcommand{\bq}{{\utwi{q}}}
\newcommand{\bu}{{\utwi{u}}}
\newcommand{\bv}{{\utwi{v}}}
\newcommand{\bw}{{\utwi{w}}}
\newcommand{\bx}{{\utwi{x}}}
\newcommand{\by}{{\utwi{y}}}
\newcommand{\bz}{{\utwi{z}}}
\newcommand{\bA}{{\utwi{A}}}
\newcommand{\bC}{{\utwi{C}}}
\newcommand{\bD}{{\utwi{D}}}
\newcommand{\bE}{{\utwi{E}}}
\newcommand{\bF}{{\utwi{F}}}
\newcommand{\bG}{{\utwi{G}}}
\newcommand{\bH}{{\utwi{H}}}
\newcommand{\bI}{{\utwi{I}}}
\newcommand{\bK}{{\utwi{K}}}
\newcommand{\bL}{{\utwi{L}}}
\newcommand{\bM}{{\utwi{M}}}
\newcommand{\bO}{{\utwi{O}}}
\newcommand{\bP}{{\utwi{P}}}
\newcommand{\bQ}{{\utwi{Q}}}
\newcommand{\bR}{{\utwi{R}}}
\newcommand{\bS}{{\utwi{S}}}
\newcommand{\bU}{{\utwi{U}}}
\newcommand{\bW}{{\utwi{W}}}
\newcommand{\bX}{{\utwi{X}}}
\newcommand{\bY}{{\utwi{Y}}}
\newcommand{\bZ}{{\utwi{Z}}}
\newcommand{\bPsi}{{\utwi{\mathnormal\Psi}}}
\newcommand{\bPhi}{{\utwi{\mathnormal\Phi}}}
\newcommand{\bXi}{{\utwi{\mathnormal\Xi}}}
\newcommand{\blam}{{\utwi{\mathnormal\lambda}}}
\newcommand{\bepsilon}{{\utwi{\epsilon}}}
\newcommand{\bGamma}{{\utwi{\Gamma}}}
\newcommand{\bLambda}{{\utwi{\Lambda}}}
\newcommand{\bOmega}{{\utwi{\Omega}}}
\renewcommand{\bPhi}{{\utwi{\Phi}}}
\renewcommand{\bPsi}{{\utwi{\Psi}}}
\newcommand{\bSigma}{{\utwi{\Sigma}}}
\newcommand{\bTheta}{\utwi{\Theta}}
\newcommand{\bzero}{{\utwi{0}}}
\begin{document}

\def\spacingset#1{\renewcommand{\baselinestretch}%
{#1}\small\normalsize} \spacingset{1}

\newtheorem{definition}{Definition}[section]
\newtheorem{theorem}{Theorem}
\newtheorem{lemma}{Lemma} 
\newtheorem{corollary}{Corollary}
\newtheorem{proposition}{Proposition}
\newtheorem{remarkx}{Remark}
\newenvironment{remark}
{\begin{remarkx}
\em}
{\end{remarkx}}

\if0\blind
{
  \title{\bf Constrained Factor Models for \\ High-Dimensional Matrix-Variate \\ Time Series}
  \author[1]{Elynn Y. Chen \thanks{Supported in part by NSF Grants DMS-1803241.}}
  \author[2]{Ruey S. Tsay \thanks{Supported in part by the Booth School of Business, University of Chicago.}}
  \author[3]{Rong Chen \thanks{Supported in part by NSF Grants DMS-1503409, DMS-1737857, DMS-1803241 and IIS-1741390.}}
  \affil[1]{Department of Operations Research and Financial Engineering, Princeton University}  
  \affil[2]{Booth School of Business, University of Chicago}
  \affil[3]{Department of Statistics and Biostatistics, Rutgers University}
  \date{\vspace{-5ex}}
  \maketitle
} \fi

\if1\blind
{
  \bigskip
  \bigskip
  \bigskip
  \title{\bf Constrained Factor Models for \\ High-Dimensional Matrix-Variate \\ Time Series}
  \date{\vspace{-5ex}}
  \maketitle
  \medskip
} \fi

\bigskip
\begin{abstract}
High-dimensional matrix-variate time series data are becoming widely available in many scientific fields, such as economics, biology and meteorology. 
To achieve significant dimension reduction while preserving the intrinsic matrix structure and temporal dynamics in such data, \cite{wang2018factor} proposed a matrix factor model that is shown to be able to provide effective analysis. 
In this paper, we establish a general framework for incorporating domain and prior knowledge in the matrix factor model through linear constraints. 
The proposed framework is shown to be useful in achieving parsimonious parameterization, facilitating interpretation of the latent matrix factor, and identifying specific factors of interest. 
Fully utilizing the prior-knowledge-induced constraints results in more efficient and accurate modeling, inference, dimension reduction as well as a clear and better interpretation of the results. 
Constrained, multi-term, and partially constrained factor models for matrix-variate time series are developed, with efficient estimation procedures and their asymptotic properties. 
We show that the convergence rates of the constrained factor loading matrices are much faster than those of the conventional matrix factor analysis under many situations. 
Simulation studies are carried out to demonstrate finite-sample performance of the proposed method and its associated asymptotic properties. 
We illustrate the proposed model with three applications, where the constrained matrix-factor models outperform their unconstrained counterparts in the power of variance explanation under the out-of-sample 10-fold cross-validation setting.
\end{abstract}

\vspace{3em}

\noindent%
{\it Keywords:}  Constrained eigen-analysis; Convergence in L2-norm; Dimension reduction; Factor model, Matrix-variate time series.

\newpage
\spacingset{1.3}
\section{Introduction}
\label{sec:intro}

High-dimensional matrix-variate time series have been widely observed nowadays in a variety of scientific fields including economics, meteorology, and ecology. 
For example, the World Bank and the International Monetary Fund collect and publish macroeconomic data of more than thirty variables spanning over one hundred years and over two hundred countries covering a variety of demographic, social, political, and economic topics. 
These data neatly form a matrix-variate time series with rows representing the countries and columns representing various macroeconomic indexes. 
Typical factor analysis of such data either converts the matrix into a vector or modeling the row or column vectors separately \citep{Chamberlain-1983, Chamberlain-Rothschild-1983, Bai-2003, Bai-Ng-2002, Bai-Ng-2007, Forni-Hallin-Lippi-Reichline-2000, forni2004generalized, pan2008modelling, Lam-Yao-2011, Lam-Yao-2012, chang2015high}. 
However, the components of matrix-variates are often dependent among rows and columns with 
certain well-defined structure. 
Vectorizing a matrix-valued response, or modeling the row or column vectors separately may overlook some intrinsic dependency and fail to capture the matrix structure.
\cite{wang2018factor} propose a matrix factor model that maintains and utilizes the matrix structure of the data to achieve significant dimension reduction.

In factor analysis of matrix time series and in many other types of high-dimensional data, the problem of factor interpretations is of paramount importance. 
Furthermore, it is important in many practical applications to obtain specific latent factors related to certain domain theories, and with the aid of these specific factors to predict future values of interest more accurately. 
For example, financial researchers may be interested in extracting the latent factors of level, slope, and curvatures of the interest-rate yield curve and in predicting future equity prices based on those factors \citep{diebold2005modeling, diebold2006macroeconomy, rudebusch2008macro, bansal2014stock}. 

In many applications, relevant prior or domain knowledge is available or data themselves exhibit certain specific structure. Additional covariates may also be measured. For example, in business and economic forecasting, sector or group information of variables under study is often available. 
Such {\em a priori} information can be incorporated to improve the accuracy and inference of the analysis and to produce more parsimonious and interpretable factors. 
In other cases, the existing domain knowledge may intrigue researchers' interest in some specific factors. 
The theories and prior experience may provide guidance for specifying the measurable variables related to the specific factors of interest. 
It is then desirable to build proper constraints based on those measurable variables in order to effectively obtain the factors of interest. 


To address these important issues and practical needs, we extend the matrix factor model of \cite{wang2018factor} by imposing natural constraints among the column and row variables to incorporate prior knowledge or to induce specific factors. Incorporating {\em a priori} information in parameter estimation has been widely used in statistical analysis, such as the constrained maximum likelihood estimation, constrained least squares, and penalized least squares. Constrained maximum likelihood estimation with the parameter space defined by linear or smooth nonlinear constraints have been explored in the literature.  \cite{hathaway1985constrained} applies the constrained maximum likelihood estimation to the problem of mixture normal distributions and shows that the constrained estimation avoids the problems of singularities and spurious maximizers often encountered by an unconstrained estimation. \cite{geyer1991constrained} proposes a general approach applicable to many models specified by constraints on the parameter space and illustrates his approach with a constrained logistic regression of the incidence of Down's syndrome on maternal age. Penalty methods have also been customarily used to enforce constraints in statistical models including generalized linear models, generalized estimating equations, proportional hazards models, and M-estimators. See, for example,  \cite{frank1993statistical}, \cite{tibshirani1996regression}, \cite{liu2007support}, \cite{fan2001variable}, \cite{zou2006adaptive}, and \cite{zhang2007adaptive}.  
It is shown that including the soft constraints as penalizing term enhances the prediction accuracy and improves the interpretation of the resulting statistical model. 

For factor models of time series, \cite{Tsai-Tsay-2010} and \cite{tsai2016doubly} impose constraints, constructed by some empirical procedures, that incorporate the inherent data structure, to both the classical and approximate factor models. Their results show that the constraints are useful tools to obtain parsimonious econometric models for forecasting, to simplify the interpretations of common factors, and to reduce the dimension. Motivated by similar concerns, we consider constrained, multi-term, and partially constrained factor models for high-dimensional matrix-variate time series. Our methods differs from \cite{Tsai-Tsay-2010} in several aspects.  First, we deal with matrix factor model and thus have the flexibility to impose row and column constraints. The interaction between the row and column constraints are explored. Second, we adopt a different set of assumptions for factor model. The factor models in \cite{Tsai-Tsay-2010} and \cite{tsai2016doubly}, following the definition in \cite{Bai-2003, Bai-Ng-2002, Bai-Ng-2007, Forni-Hallin-Lippi-Reichline-2000, forni2004generalized}, attempt to separate the common factors that affect the dynamics of most original component series from the idiosyncratic series that at most affect the dynamics of a few original time series. Such a definition is appealing in analyzing economic and financial phenomena. But the fact that idiosyncratic part may exhibit serial correlations poses technical difficulties in both identification and inference. These factor models are only asymptotically identifiable because a rigorous definition of the common factors can only be established when the dimension of time series goes to infinity. In our setting, the matrix-variate time series is decomposed into two parts: a dynamic part driven by a lower-dimensional factor time series and a static part consisting of matrix white noises. Since the white-noise series exhibits no dynamic correlations, the decomposition is unique in the sense that both the dimension of the factor process and the factor loading space are identifiable for a given finite sample size. See \cite{Lam-Yao-2011, Lam-Yao-2012, chang2015high, wang2018factor} for more detailed comparisons between these two different model definitions. 

The rest of the paper is organized as follows. Section \ref{sec:model} introduces the constrained, multi-term, and partially constrained matrix-variate factor models. Section \ref{sec:estimation} presents estimation procedures for constrained and partially constrained factor models with different constraints. Section \ref{sec:theory} investigates theoretical properties of the estimators. Section \ref{sec:simulation} presents some simulation results whereas Section \ref{sec:application} contains three applications. Section \ref{sec:summary} concludes. All proofs are in the Appendix.

\section{The Constrained Matrix Factor Model}
\label{sec:model}

For consistency in notation, we adopt the following conventions. A bold capital letter $\bA$ represents a matrix, a bold lower letter $\ba$ represents a column vector, and a lower letter $a$ represents a scalar. The $j$-th column vector and the $k$-th row vector of the matrix $\bA$ are denoted by 
$A_{\cdot j}$ and $A_{k \cdot}$, respectively.

Let $ \{ \bY_t \}_{t=1}^T $ be a matrix-variate time series, where $\bY_t$ is a $p_1 \times p_2$ matrix, that is  
\begin{equation*}
\bY_t = \left( Y_{\cdot 1, t}, \cdots, Y_{\cdot p_2, t} \right) 
= \left( \begin{array}{c} 
Y'_{1 \cdot,t} \\
\vdots \\
Y'_{p_1 \cdot,t}
\end{array}\right)
=  \left( \begin{array}{ccc} 
y_{11,t} & \cdots & y_{1p_2,t} \\
\vdots    & \ddots & \vdots \\
y_{p_11,t} & \cdots & y_{p_1p_2,t}
\end{array}\right).
\end{equation*}
\cite{wang2018factor} propose the following factor model for $\bY_t$, 
\begin{equation}  \label{eqn:mfm}
\bY_t = \bLambda \bF_t \bGamma' + \bU_t, \qquad t = 1, 2, \ldots, T,  
\end{equation}
where $\bF_t$ is a $k_1 \times k_2$ latent matrix-variate time series of common fundamental factors, $\bLambda$ is a $p_1 \times k_1$ row 
loading matrix, $\bGamma$ is a $p_2 \times k_2$ column loading matrix, and $\bU_t$ is a $p_1 \times p_2$ matrix of random errors. In Equation (\ref{eqn:mfm}), 
$(\bLambda,\bGamma)$ and $(c\bLambda,\bGamma/c)$ are equivalent if $c \neq 0$.

In Model (\ref{eqn:mfm}), we assume that $vec(\bU_t) \sim WN(\mathbf{0}, \bSigma_e)$ and is independent of the factor process $vec(\bF_t)$. That is, $\{\bU_t\}_{t=1}^T$ is a white noise matrix-variate time series and the common fundamental factors $\bF_t$ drive all dynamics and co-movement of $\bY_t$. $\bLambda$ and $\bGamma$ reflect the importance of common factors and their interactions. \cite{wang2018factor} provide several interpretations of the loading matrices $\bLambda$ and $\bGamma$. Essentially, $\bLambda$ ($\bGamma$) can be viewed as the row (column) loading matrix that reflects how each row (column) in $\bY_t$ depends on the factor matrix $\bF_t$. The interaction between the row and column is introduced through the multiplication of these terms. 

The definition of common factors in Model (\ref{eqn:mfm}) is similar to that of \cite{Lam-Yao-2011}. This decomposition facilitates model identification in finite samples and simplifies the procedure of model identification and statistical inference. However, under the definition, both the ``common factors'' defined in the traditional factor models and the serially correlated idiosyncratic components will be identified as factors. The method in \cite{wang2018factor} can only identify ``common" factors in the sense that those identifiable factors must be of certain strength. Weak factors will be left 
``erroneously'' in the noise in application. 
Moreover, when the dimensions $p_1$ and $p_2$ are sufficiently large, interpretation of the estimated common factors $\widehat{\bF}_t$ becomes difficult because of the uncertainty and dependence involved in the estimates of the loading matrices $\bLambda$ and $\bGamma$. 

To mitigate the aforementioned difficulties  and, more importantly, to incorporate natural and known constraints among the column and row variables, we consider the following  constrained and partially constrained matrix factor models. 

A {\em constrained matrix factor model} can be written as 
\begin{equation}  \label{eqn:cmfm}
\bY_t = \bH_R \bR \bF_t \bC' \bH_C' + \bU_t,   
\end{equation} 
where $\bH_R$ and $\bH_C$ are pre-specified full column-rank $p_1 \times m_1$ and $p_2 \times m_2$ constraint matrices, respectively, and $\bR$ and $\bC$ are $m_1 \times k_1$ row loading matrix and $m_2 \times k_2$ column loading matrix, respectively. For meaningful 
constraints, we assume $k_1 \le m_1 << p_1$ and $k_2 \le m_2 << p_2$. Compared with the matrix factor model in (\ref{eqn:mfm}), we set $\bLambda = \bH_R \bR$ and $\bGamma=\bH_C \bC$ with $\bH_R$ and $\bH_C$ given. The number of parameters in the left loading matrix $\bR$ is $m_1k_1$, smaller than $p_1k_1$ of the unconstrained model. The number of parameters in the column loading matrix $\bC$ also decreases from $p_2k_2$ to $m_2k_2$. The constraint matrices $\bH_R$ and $\bH_C$ are constructed based on prior or domain knowledge of the variables. 

\subsection{Examples of Constraint Matrices}  \label{sec:constraints}
We first consider discrete \textbf{covariate-induced constraint matrices}, using dummy variables. 
Continuous covariate may be segmented into regimes.
As an illustration we consider the following toy example of corporate financial matrix-valued time series. Suppose we have $8$ companies, which can be grouped according to their industrial classification (Tech and Retail) and also their market capitalization 
(Large and Medium). The two groups form $2 \times 2$ combinations as shown in Table \ref{table:company_category},

\begin{table}[ht!]
	\centering
	\makebox[0pt][c]{\parbox{0.9\textwidth}{%
			\begin{minipage}[b]{0.5\hsize} \centering
				\begin{adjustbox}{center, width=0.9\columnwidth}
				\begin{tabular}{clcc}
					& \multicolumn{1}{c}{} & \multicolumn{2}{c}{Market Cap} \\
					& \multicolumn{1}{c}{} & C1. Large & C2. Medium \\ \cline{3-4} 
					\multirow{2}{*}{Industry} & \multicolumn{1}{l|}{I1. Tech} & \multicolumn{1}{c|}{Apple, Microsoft} & \multicolumn{1}{c|}{Brocade, FireEye} \\ \cline{3-4} 
					& \multicolumn{1}{l|}{I2. Retail} & \multicolumn{1}{c|}{Walmart, Target} & \multicolumn{1}{c|}{JC Penny, Kohl's} \\ \cline{3-4} 
				\end{tabular}
			    \end{adjustbox}
			\end{minipage}		    
			\begin{minipage}[b]{0.5\hsize} \centering
				\begin{adjustbox}{center, width=0.9\columnwidth}
				\begin{tabular}{ccc}
					& Industry & Market Cap \\ \cline{2-3} 
					\multicolumn{1}{c|}{Apple} & \multicolumn{1}{c|}{I1} & \multicolumn{1}{c|}{C1} \\ 
					\multicolumn{1}{c|}{Microsoft} & \multicolumn{1}{c|}{I1} & \multicolumn{1}{c|}{C1} \\
					\multicolumn{1}{c|}{Brocade} & \multicolumn{1}{c|}{I1} & \multicolumn{1}{c|}{C2} \\  
					\multicolumn{1}{c|}{FireEye} & \multicolumn{1}{c|}{I1} & \multicolumn{1}{c|}{C2} \\ \cline{2-3} 
					\multicolumn{1}{c|}{Walmart} & \multicolumn{1}{c|}{I2} & \multicolumn{1}{c|}{C1} \\ 
					\multicolumn{1}{c|}{Target} & \multicolumn{1}{c|}{I2} & \multicolumn{1}{c|}{C1} \\ 
					\multicolumn{1}{c|}{JC Penny} & \multicolumn{1}{c|}{I2} & \multicolumn{1}{c|}{C2} \\ 
					\multicolumn{1}{c|}{Kohl's} & \multicolumn{1}{c|}{I2} & \multicolumn{1}{c|}{C2} \\ \cline{2-3} 
				\end{tabular}
			    \end{adjustbox}
			\end{minipage}			
	}}
\caption{Groups of companies by industry and market capitalization}
\label{table:company_category} 
\end{table} 

Constraint matrix $\bH_R^{(1)}$ in Table \ref{table:constraint-illus} utilizes only industrial classification. To combine both industrial classification and market cap information, we first consider an additive model constraint on the $8 \times k_1$ ($k_1 \le 3$) loading matrix $\bLambda$ in Model (\ref{eqn:mfm}). The additive model constraint means that the $i$-th row of $\bLambda$, that is, the loadings of $k_1$ row factors on the $i$-th variable, must assume the form $\blam_{i \,\cdot} = \bu_{j \, \cdot} + \bv_{l \, \cdot}$, where the $i$-th variable falls in group $({Industry}_j, {MarketCap}_l)$, $k_1$-dimensional vectors $\bu_{j \, \cdot}$ and $\bv_{l \, \cdot}$ are the loadings of $k_1$ row factors on the $j$-th market cap group and $l$-th industrial group, respectively. The most obvious way to express the additive model constraint is to use row constraints $\bH^{(2)}_R$ in Table \ref{table:constraint-illus}. Then, in the constrained matrix factor model (\ref{eqn:cmfm}), $\bH_R=\bH^{(2)}_R$ and $\bR = (\bu_{1 \, \cdot}, \bu_{2 \, \cdot}, \bv_{1 \, \cdot}, \bv_{2 \, \cdot})'$. 
 
\begin{table}[ht!]
	\centering
	\makebox[0pt][c]{\parbox{\textwidth}{%
			\begin{minipage}[b]{0.23\hsize}\centering
				\begin{tabular}{c|cc|}
					\cline{2-3}
					& 1 & 0 \\
					& 1 & 0 \\
					& 1 & 0 \\
					\multirow{2}{*}{$\bH^{(1)}_R$=} & 1 & 0 \\ \cline{2-3} 
					& 0 & 1 \\
					& 0 & 1 \\
					& 0 & 1 \\
					& 0 & 1  \\ \cline{2-3} 
				\end{tabular}
			\end{minipage}
			\hfill
			\begin{minipage}[b]{0.32\hsize}\centering
				\begin{tabular}{c|cc|cc|}
					\cline{2-5}
					& 1 & 0 & 1 & 0 \\
					& 1 & 0 & 1 & 0 \\
					& 1 & 0 & 0 & 1 \\
					\multirow{2}{*}{$\bH^{(2)}_R$=} & 1 & 0 & 0 & 1 \\ \cline{2-5} 
					& 0 & 1 & 1 & 0 \\
					& 0 & 1 & 1 & 0 \\
					& 0 & 1 & 0 & 1 \\
					& 0 & 1 & 0 & 1 \\ \cline{2-5} 
				\end{tabular}
			\end{minipage}%
		    \hfill
		    \begin{minipage}[b]{0.4\hsize}\centering
		    	\begin{tabular}{c|cc|cc|c|}
		    		\cline{2-6}
		    		& 1 & 0 & 1 & 0 & 1 \\
		    		& 1 & 0 & 1 & 0 & 1 \\
		    		& 1 & 0 & 0 & 1 & -1 \\
		    		\multirow{2}{*}{$\bH^{(3)}_R=$} & 1 & 0 & 0 & 1 & -1 \\ \cline{2-6} 
		    		& 0 & 1 & 1 & 0 & -1 \\
		    		& 0 & 1 & 1 & 0 & -1 \\
		    		& 0 & 1 & 0 & 1 & 1 \\
		    		& 0 & 1 & 0 & 1 & 1 \\ \cline{2-6} 
		    	\end{tabular}
		    \end{minipage}%
	}}
    \caption{Illustration of constraint matrices constructed from grouping information by additive model.}
    \label{table:constraint-illus}
\end{table}

Further, we consider the constraint incorporating an interaction term between industry and market cap grouping information. Now the $i$-th row of $\bLambda$ has the form $\blam_{i \,\cdot} = \bu_{j \, \cdot} + \bv_{l \, \cdot} + \alpha_{j,l} \bw$, where $\bw$ is the $k_1$-dimensional interaction vector containing loadings of $k_1$ row factors and $\alpha_{ij}$ is the interaction term determined by $\bu_{j \, \cdot}$ and $\bv_{l \, \cdot}$ jointly. For example, 
\[
\alpha_{j,l} = \begin{cases}
1 & if \quad j=l=1 \text{ or } 2,\\
-1 & if \quad j=1, l=2 \text{ or vice versa}.
\end{cases}
\]

In this case,  for the constrained matrix factor model (\ref{eqn:cmfm}), $\bH_R=\bH^{(3)}_R$ and $\bR = (\bu_{1 \, \cdot}, \bu_{2 \, \cdot}, \bv_{1 \, \cdot}, \bv_{2 \, \cdot}, \bw)'$. Note that $\bH^{(2)}_R$ and $\bH^{(3)}_R$ here are not full column rank and can be reduced to a full column rank matrix satisfying the requirement in Section \ref{sec:estimation}. But the presentations of $\bH^{(2)}_R$ and $\bH^{(3)}_R$ are sufficient to illustrate the ideas of constructing complex constraint matrices.

To illustrate a \textbf{theory-induced constraint matrix}, we consider the yield curve latent factor model. \cite{nelson1987parsimonious} propose the Nelson-Siegel representation of the yield curve using a variation of the three-component exponential approximation to the cross-section of yields at any moment in time, 
\[
y(\tau) = \beta_1 + \beta_2 \left( \frac{1-e^{-\lambda \tau}}{\lambda \tau} \right) + \beta_3 \left( \frac{1-e^{-\lambda \tau}}{\lambda \tau} - e^{-\lambda \tau} \right),
\]
where $y(\tau)$ denotes the set of zero-coupon yields and $\tau$ denotes time to maturity. 

\cite{diebold2006forecasting} and \cite{diebold2006macroeconomy} interpret the Nelson-Siegel representation as a dynamic latent factor model where $\beta_1$, $\beta_2$, and $\beta_3$ are time-varying latent factors that capture the level (L), slope (S), and curvature (C) of the yield curve at each period $t$, while the terms that multiply the factors are respective factor loadings, that is 
\[
y(\tau) = L_t + S_t \left( \frac{1-e^{-\lambda \tau}}{\lambda \tau} \right) + C_t \left( \frac{1-e^{-\lambda \tau}}{\lambda \tau} - e^{-\lambda \tau} \right).
\]
The factor $L_t$ may be interpreted as the overall level of the yield curve since its loading is equal for all maturities. The factor $S_t$, representing the slope of the yield curve, has a maximum loading (equal to $1$) at the shortest maturity and then monotonically decays through 0 (to -1) as maturities increase. And the factor $C_t$ has a loading that is $-1$ at the shortest maturity, increases to an intermediate maturity (equal to 2) and then falls back to $-1$ as maturities increase. Hence, $S_t$ and $C_t$ capture the short-end and medium-term latent components of the yield curve. The coefficient $\lambda$ controls the rate of decay of the loading of $C_t$ and the maturity where $S_t$ has maximum loading. 

Multinational yield curve can be represented as a matrix time series $\{\bY_t\}_{t=1}^T$, 
where rows of $\bY_t$ represent time to maturity and columns of $\bY_t$ denotes countries. To capture the characteristics of loading matrix specific to the level, slope, and curvature factors, we could set row loading constraint matrix to, for example, $\bH_R = [\bh_1, \bh_2, \bh_3]$, where $\bh_1=(1,1,1,1,1)'$, $\bh_2=(1,1,0,-1,-1)'$ and $\bh_3=(-1,0,2,0,-1)$. In Section \ref{sec:simulation}, we try to mimic multinational yield curve and generate our samples from this type of constraints.

\subsection{Multi-term and partially constrained matrix factor models} \label{sec:multi-term}

If there are two ``distinct'' sets of constraints and the factors corresponding to these two sets do not interact, Model (\ref{eqn:cmfm}) can be extended to a {\em multi-term matrix factor model} as  
\begin{equation}  \label{eqn:cmfm-m}
\bY_t = \bH_{R_1} \bR_1 \bF_{1t} \bC_1' \bH_{C_1}' + \bH_{R_2} \bR_2 \bF_{2t} \bC_2' \bH_{C_2}' + \bU_t.   
\end{equation} 
For example, countries can be grouped according to their geographic locations, such as European and Asian countries, and also grouped according to their economic characteristics, such as natural resource based and manufacture based economies, and the corresponding factors may not interact with each other. 

Note that (\ref{eqn:cmfm-m}) can be rewritten as (\ref{eqn:cmfm}), with $\bH_R=\begin{bmatrix} \bH_{R_1} & \bH_{R_2} \end{bmatrix}$, $\bH_C=\begin{bmatrix} \bH_{C_1} & \bH_{C_2} \end{bmatrix}$,
\[
\bR= \begin{bmatrix}
\bR_1 & 0 \\
0 & \bR_2 
\end{bmatrix},
\bC= \begin{bmatrix}
\bC_1 & 0 \\
0 & \bC_2 
\end{bmatrix},  \text{ and } 
\bF_t=\begin{bmatrix} 
\bF_{1t} & 0 \\
0 & \bF_{2t}
\end{bmatrix}.
\]
Hence (\ref{eqn:cmfm-m}) is a special case of (\ref{eqn:cmfm}) with the strong assumption that the factor matrix is block diagonal. Such a simplification can greatly enhance the interpretation of the model. 

\begin{remark}
The pre-specified constraint matrices $\bH_{R_1}$ and $\bH_{R_2}$ do not have to be orthogonal. Neither does the pair $\bH_{C_1}$ and $\bH_{C_2}$. An estimation procedure is presented in Remark \ref{remark:est_multi_nonorhtogonal} in Section \ref{sec:estimation:subsec:multiterm_const}. 
The rates of convergence will change as a result of information loss from the estimation procedure to deal with the nonorthogonality of $\bH_{R_1}$ and $\bH_{R_2}$. 
Since we can always transform non-orthogonal constraint matrices to some orthogonal constraint matrices, we shall  focus on the case when $\bH_{R_1}$ and $\bH_{R_2}$ (or $\bH_{C_1}$ and $\bH_{C_2}$) are orthogonal.
\end{remark}

In many applications, prior or domain knowledge may not be sufficiently comprehensive or may only provide a partial specification of the constraint matrices. 
In the above example, it is possible that the countries within a group react to one set of factors the same way, but differently to another set of factors. 
In such cases, a partially constrained factor model would be more appropriate. 
Specifically, a {\em partially constrained matrix factor model} can be written as 
\begin{equation}  \label{eqn:pcmfm}
\bY_t = \begin{bmatrix} \bH_{R_1} \bR_1 & \bLambda_2 \end{bmatrix} 
\begin{bmatrix}
\bF_{11,t} & \bF_{12,t} \\
\bF_{21,t} & \bF_{22,t}
\end{bmatrix}
\begin{bmatrix} \bC'_1 \bH'_{C_1} \\ \bGamma_2' \end{bmatrix}
+ \bU_t,   
\end{equation}   
where $\bH_{R_1}$, $\bR_1$, $\bH_{C_1}$ and $\bC_1$ are defined similarly  as those in (\ref{eqn:cmfm-m}). 
$\bF_{ij,t}$'s are common matrix factors corresponding to the interactions of the row and column loading space spanned by the columns of $\bH_R$ and $\bH_C$ and their complements, $\bLambda_2$ is $p_1 \times q_1$ row loading matrix and $\bGamma_2$ is a $p_2 \times q_2$ column loading matrix. 
Again, we have $q_1 < p_1$, $q_2 < p_2$ and $vec(\bF_{ij,t})$'s are independent with $vec(\bU_t)$. 
We assume that $\bH'_{R_1} \bLambda_2 = \utwi{0}$ and $\bH'_{C_1} \bGamma_2 = \utwi{0}$, because all the row loadings that are in the space of $\bH_{R_1}$ and all the column loadings that are in the space of $\bH_{C_1}$  could be absorbed into the first parts of loading matrices. 
Thus, we could explicitly rewrite the model as 
\begin{equation}  \label{eqn:pcmfm_explicit}
\bY_t = \begin{bmatrix} \bH_{R_1} \bR_1 & \bH_{R_2} \bR_2 \end{bmatrix} 
\begin{bmatrix}
\bF_{11,t} & \bF_{12,t} \\
\bF_{21,t} & \bF_{22,t}
\end{bmatrix}
\begin{bmatrix} \bC'_1 \bH'_{C_1} \\ \bC'_2 \bH'_{C_2} \end{bmatrix}
+ \bU_t,  
\end{equation}   
where $\bH_{R_2}$ is a $p_1 \times (p_1 - m_1)$ constraint matrix satisfying $\bH_{R_1}' \bH_{R_2} = \utwi{0}$, $\bH_{C_2}$ is a $p_2 \times (p_2-m_2)$ constraint matrix satisfying $\bH_{C_1}' \bH_{C_2} =\utwi{0}$, $\bR_2$ is $(p_1 - m_1) \times q_1$ row loading matrix, and $\bC_2$ is a $(p_2-m_2) \times q_2$ column loading matrix. 

In the special case when $\bF_{21,t} = \mathbf{0}$ and $\bF_{12,t} = \mathbf{0}$, Model (\ref{eqn:pcmfm}) can be further simplified as 
\begin{equation}  \label{eqn:pcmfm_explicit_simple}
\bY_t = \bH_{R_1} \bR_1 \bF_{11,t} \bC'_1 \bH'_{C_1} +  \bH_{R_2} \bR_2 \bF_{22,t} \bC'_2 \bH'_{C_2} + \bU_t.  
\end{equation} 

Model (\ref{eqn:pcmfm_explicit_simple}) is different from the multi-term model of (\ref{eqn:cmfm-m}) in that the matrix $\bH_{R_2}$ in (\ref{eqn:pcmfm_explicit}) is induced from $\bH_{R_1}$ while the $\bH_{R_2}$ in (\ref{eqn:cmfm-m}) is an informative constraint, with a lower dimension.

In the special case when $\bH_{C_1}=\bI_{p_1}$ (there is no column constraint), Model (\ref{eqn:pcmfm_explicit}) becomes
\[
\bY_t = \begin{bmatrix} \bH_{R_1} \bR_1 & \bH_{R_2} \bR_2 \end{bmatrix} 
\begin{bmatrix}
\bF_{1,t} \\
\bF_{2,t}
\end{bmatrix}
\bC'
+ \bU_t,
\]
where $\bF_{1,t} = [\bF_{11,t}, \bF_{12,t}]$ and $\bF_{2,t} = [\bF_{21,t}, \bF_{22,t}]$. 
The left loading matrix still spans the entire $p_1$ dimensional space, but the first part of loading matrix $\bR_1$ has a clearer interpretation.

The partially constrained matrix factor model (\ref{eqn:pcmfm_explicit}) incorporates partial information $\bH_{R_1}$ and $\bH_{C_1}$ in the unconstrained model (\ref{eqn:mfm}) without ignoring the possible remainders. 
If we include all four matrix factors in the four subspaces divided by the interactions of $\bH_{R_1}$ and $\bH_{C_1}$ and their complements, the number of parameters in (\ref{eqn:pcmfm_explicit}) is the same as that in the unconstrained model (\ref{eqn:mfm}). 
However, as shown by Theorem \ref{theorem:conv_rate_loading_mat} in Section \ref{sec:theory}, the rates of convergence are faster than those of the unconstrained matrix factor model. 
Furthermore, in many applications, inclusion of only two matrix-factor terms is adequate in explaining a high percentage of variability, as exemplified by the three applications in Section \ref{sec:application}.  

\begin{remark}
Subpanel structure in multivariate time series is encountered frequently in real applications. For example, macroeconometric data often consist of large panels of time series which can be further divided into smaller but still quite large subpanels or blocks. Built upon \cite{forni2004generalized} and  \cite{hallin2007determining}, \cite{hallin2011dynamic} considered $n$-dimensional random variable $\bx = [\by' \; \bz']'$ with sub-panel vectors $\by \in \mathbb{R}^{n_y}$ and $\bz \in \mathbb{R}^{n_z}$ and proposed a method to identify and estimate joint and block-specific common factors. 
There are connections between the subpanel structure and the constrained structure considered in this paper. Both approaches produce certain block structures in the loading matrix.
Consider the vector factor model case. With two subpanels, the model becomes
\begin{equation*}
\begin{bmatrix}\by \\ \bz\end{bmatrix} = \begin{bmatrix} \bA_{11} & \bA_{12} & \bzero \\ \bA_{21} & \bzero & \bA_{23} \end{bmatrix} \begin{bmatrix} \bff_1 \\ \bff_2 \\ \bff_3 \end{bmatrix} + \begin{bmatrix} \bepsilon_y \\ \bepsilon_z
\end{bmatrix}.
\end{equation*}
Such a model can be constructed under the constraint approach by specifying
\[
\bH=\begin{bmatrix}
\bI & \bI & \bzero \\
\bI & \bzero & \bI \\
\end{bmatrix}
\mbox{\ \ and \ \ }
\bR=\begin{bmatrix}\bA_{11} & \bzero & \bzero \\
\bzero & \bA_{12} & \bzero \\
\bA_{21}-\bA_{11} & \bzero & \bA_{23} \\
\end{bmatrix},
\]
where $\bI$'s and $\bzero$'s are identity and zero matrices of proper dimensions.
However,
our current estimation procedure is not able to force certain submatrice
in $\bR$ to zero, though the model can be turned into a
multi-term factor model as discussed in Section \ref{sec:multi-term}.
On the other hand, the constraint approach is more flexible in introducing
various types of structure in the loading matrix as illustrated in Section \ref{sec:constraints}.
\end{remark}

The benefits of considering partially constrained matrix factor models are two-folds. Firstly, the model is capable of 
identifying, from the complement spaces of $\bH_R$ and $\bH_C$, the factors that are unknown to researchers. In this case, the dimensions of $\bF_{22,t}$ are typically much smaller than those of $\bF_{11,t}$ even though the loading matrices $\bR_2$ and $\bC_2$ still have large numbers 
of rows $(p_1-m_1)$ and $(p_2-m_2)$, respectively. This is because the constraint part should have accommodated the main and key common factors. The spirit is similar to the two-step estimation of  \cite{Lam-Yao-2012} in which one fits a second-stage factor model to the residuals obtained by  subtracting the common part of the first-stage factor model. 

The second benefit is that the model is able to identify the factors corresponding to the pre-specified constraint matrices and their inherit interpretation. That is, $\bF_{11,t}$ represents the factor matrix with row and column factors affecting the observed matrix-variate time series in the way as specified by the constraints $\bH_R$ and $\bH_C$ completely. Consider the multinational macroeconomic index example. If $\bH_R$ is built from the country classification information, how the rows in $\bF_{11,t}$ affect the observations can be completely explained by the country groups instead of individual countries and the row factors in $\bF_{11,t}$ have a clearer interpretation related to the classification.  In many practical applications, researchers are interested in obtaining specific latent factors related to some domain theories and use these specific factors to predict future values of interest as guided by domain theories. For example, in 
the yield curve example in  
Section~\ref{sec:constraints}, economic theory implies that the level, slope, and curvature factors affect the observations in the way specified by, for example,    $\bH_R = [\bh_1, \bh_2, \bh_3]$, where $\bh_1=(1,1,1,1,1)'$, $\bh_2=(1,1,0,-1,-1)'$, and $\bh_3=(-1,0,2,0,-1)$. Then the estimation method in Section \ref{sec:estimation} is capable of isolating $\bH_{R_1} \bR_1 \bF_{11,t} \bC_1' \bH'_{C_1}$, hence correctly estimating the loadings and the specified level, slope, and curvature factors in the constrained spaces. As a result, the constrained factor model can serve as a method to identify and isolate specific factors suggested by domain theories or prior knowledge. 

\section{Estimation Procedure}
\label{sec:estimation}

Similar to all factor models, identification issue exits in the constrained matrix-variate factor model (\ref{eqn:cmfm}). Let $\bO_1$ and $\bO_2$ be two invertible matrices of size $k_1 \times k_1$ and $k_2 \times k_2$. Then the triples $(\bR, \bF_t, \bC)$ and $(\bR \bO_1, \bO_1^{-1} \bF_t \bO_2^{-1}, \bO_2 \bC)$ are equivalent under Model (\ref{eqn:cmfm}). Here, we may assume that the columns of $\bR$ and $\bC$ are orthonormal, that is, $\bR'\bR=\bI_{k_1}$ and $\bC'\bC=\bI_{k_2}$, where $\bI_d$ denotes the $d\times d$ identity matrix. Even with these constraints, $\bR$, $\bF_t$ and $\bC$ are not uniquely determined in (\ref{eqn:cmfm}), as aforementioned replacement is still valid for any orthonormal $\bO$. However, the column spaces of the loading matrices $\bR$ and $\bC$ are uniquely determined. Hence, in the following sections, we focus on the estimation of the column spaces of $\bR$ and $\bC$. We denote the row and column factor loading spaces by $\mathcal{M}(\bR)$ and $\mathcal{M}(\bC)$, respectively. For simplicity, we suppress the matrix column space notation and use the matrix notation directly.

\subsection{Orthogonal Constraints} 
\label{sec:estimation:subsec:orth_const}

We start with the estimation of the constrained matrix-variate factor model (\ref{eqn:cmfm}). The approach follows the ideas of \cite{Tsai-Tsay-2010} and \cite{wang2018factor}. In what follows, we illustrate the estimation procedure for the column space of $\bR$. The column space of $\bC$ can be obtained similarly from the transpose of $\bY_t$'s. For ease in representation, we assume that the process $\bF_t$ has mean $\bzero$, and the observation $\bY_t$'s are centered and standardized throughout the paper. 

Suppose we have orthogonal constraints $\bH'_R \bH_R = \bI_{m_1}$ and $\bH'_C \bH_C = \bI_{m_2}$. Define the transformation $\bX_t = \bH'_R \bY_t \bH_C$. It follows from (\ref{eqn:cmfm}) that 
\begin{equation}  \label{eqn:cmfm_trans_orth}
\bX_t = \bR \bF_t \bC' + \bE_t, \qquad t = 1, 2, \ldots, T,  
\end{equation}
where $\bE_t = \bH'_R \bU_t \bH_C$. 

This transformation projects the observed matrix time series into the constrained space. For example, if $\bH_R$ is the orthonormal matrix corresponding to the group constraint 
$\bH^{(1)}_R$ of Table~\ref{table:constraint-illus}, then $\bH_R^{'}\bY_t$ is a $2\times p_2$ matrix, with the first row being the normalized average of the rows of $\bY_t$ in the first group and the second row being that in the second group. Such an operation conveniently incorporates the constraints while reduces the dimension of data matrix from $p_1\times p_2$ to $m_1\times m_2$, making the analysis more efficient. 

Since $\bE_t$ remains to be a white noise process, the estimation method in \cite{wang2018factor} directly applies to the transformed $m_1 \times m_2$ matrix time series  $\bX_t$ in Model (\ref{eqn:cmfm_trans_orth}). For completeness, we outline briefly the procedure. See \cite{wang2018factor} for details. 

To facilitate the estimation, we use the QR decomposition $\bR=\bQ_1 \bW_1$ and 
$\bC=\bQ_2 \bW_2$. 
The estimation of column spaces of $\bR$ and $\bC$ is equivalent to the estimation of column spaces of $\bQ_1$ and $\bQ_2$. 
Thus, Model (\ref{eqn:cmfm_trans_orth}) can be re-expressed as 
\begin{equation}  \label{eqn:cmfm_trans_orth_qr}
\bX_t = \bR \bF_t \bC' + \bE_t = \bQ_1 \bZ_t \bQ_2' + \bE_t, \qquad t = 1, 2, \ldots, T,  
\end{equation}
where $\bZ_t = \bW_1 \bF_t \bW_2'$, $\bQ_1' \bQ_1=\bI_{m_1}$, 
and $\bQ_2' \bQ_2=\bI_{m_2}$.

Let $h$ be a positive integer. For $i,j = 1, 2, \ldots, m_2$, define 
\begin{align}  
& \bOmega_{zq, ij}(h) = \frac{1}{T-h} \sum_{t=1}^{T-h} Cov(\bZ_t Q_{2, i \cdot}, \bZ_{t+h} Q_{2, j \cdot})     \label{eqn:Omega_zqijh}, \text{ and} \\
& \bOmega_{x, ij}(h) = \frac{1}{T-h} \sum_{t=1}^{T-h} Cov(X_{t, \cdot i}, X_{t+h, \cdot j}), \label{eqn:Omega_xijh}
\end{align}
which can be interpreted as the auto-cross-covariance matrices at lag $h$ between column $i$ and column $j$ of $\{\bZ_t \bQ'_{2}\}_{t=1}^T$ and $\{\bX_t\}_{t=1}^T$, respectively. 
For $h>0$, $\bOmega_{x, ij}(h)$ defined in (\ref{eqn:Omega_xijh}) does not involve the covariance terms incurred by $\bE_t$ because of the whiteness condition. 

For a fixed $h_0 \ge 1$ satisfying Condition 2 in Appendix A, 
define
\begin{equation}  \label{eqn:M_def}
\bM = \sum_{h=1}^{h_0} \sum_{i=1}^{m_2} \sum_{j=1}^{m_2} \bOmega_{x,ij}(h) \bOmega_{x,ij}(h)' = \bQ_1 \left\{ \sum_{h=1}^{h_0} \sum_{i=1}^{m_2} \sum_{j=1}^{m_2} \bOmega_{zq, ij}(h) \bOmega_{zq, ij}(h)' \right\} \bQ_1'.  
\end{equation}

Since $\bM$ and the matrix sandwiched by $\bQ_1$ and $\bQ'_1$ are positive definite matrices, Equation (\ref{eqn:M_def}) implies that the eigen-space of $\bM$ is the same as the column space of $\bQ_1$ if the middle term is full rank (Condition~2 in Appendix~A. 
Hence, $\mathcal{M}(\bQ_1)$ can be estimated by the space spanned by the eigenvectors of the sample version of $\bM$. 
The normalized eigenvectors $\bq_{1}, \ldots, \bq_{k_1}$ corresponding to the $k_1$ nonzero eigenvalues of $\bM$ are uniquely defined up to a sign change. Thus $\bQ_1$ is uniquely defined by $\bQ_1 = (\bq_{1}, \ldots, \bq_{k_1})$ up to a sign change. We estimate $\widehat{\bQ}_1= (\widehat{\bq}_{1}, \ldots, \widehat{\bq}_{k_1})$ as a representative of $\mathcal{M}(\bQ_1)$ or $\mathcal{M}(\bR)$

The estimation procedure is based on the sample version of these quantities. For $h \ge 1$ and a prescribed positive integer $h_0$, define the sample version of $\bM$ in (\ref{eqn:M_def}) as the following
\begin{equation}  \label{eqn:Mhat_def}
\widehat{\mathbf{M}} = \sum_{h=1}^{h_0} \sum_{i=1}^{m_2} \sum_{j=1}^{m_2} \widehat{\mathbf{\Omega}}_{x,ij}(h) \widehat{\mathbf{\Omega}}_{x,ij}(h)', \text{  where} \quad \widehat{\mathbf{\Omega}}_{x, ij}(h) = \frac{1}{T-h} \sum_{t=1}^{T-h} X_{t, \cdot i} X'_{t+h, \cdot j}. 
\end{equation}

Then, $\mathcal{M}(\bQ_1)$ can be estimated by $\mathcal{M}(\widehat{\bQ}_1)$, where $\widehat{\bQ}_1=(\widehat{\bq}_1, \ldots, \widehat{\bq}_{k_1})$ and $\widehat{\bq}_i$ is an eigenvector of $\widehat{\bM}$, corresponding to its $i$-th largest eigenvalue.
The $\bQ_2$ is defined similarly for the column loading matrix $\bC$ and $\mathcal{M}(\widehat{\bQ}_2)$ and $\widehat{\bQ}_2$ can be estimated with the same procedure to the transpose of $\bX_t$. Consequently, we estimate the normalized factors and residuals, respectively, by $\widehat{\bZ}_t = \widehat{\bQ}'_1 \bX_t \widehat{\bQ}_2$ and $\widehat{\bU}_t = \bY_t - \bH_R \widehat{\bQ}_1 \widehat{\bZ}_t \widehat{\bQ}'_2 \bH'_C$.

The above estimation procedure assumes that the number of row factors $k_1$ is known. To determine $k_1$, \cite{wang2018factor} used the eigenvalue ratio-based estimator of \cite{Lam-Yao-2012}. Let $\widehat{\lambda}_1 \ge \widehat{\lambda}_2 \ge \cdots \ge \widehat{\lambda}_{m_1} \ge 0$ be the ordered eigenvalues of $\widehat{\bM}$. The ratio-based estimator for $k_1$ is defined as 
\begin{equation*}  
\widehat{k}_1 = \arg \min_{1 \le j \le K} \frac{\widehat{\lambda}_{j+1}}{\widehat{\lambda}_j}, 
\end{equation*}
where $k_1 \le K \le p_1$ is an integer. In practice we may take $K = \lceil p_1/2 \rceil$. $k_2$ can be estimated with the same procedure with the $\widehat{\bM}$-matrix corresponding to the transpose of $\bX_t$.

Although the estimation procedure on the transformed series $\bX_t$ is exactly the same as that of \cite{wang2018factor}, the asymptotic properties of the estimator are different due to the transformation, as shown in Section \ref{sec:theory}, and $\bX_t$ is of lower dimension.

\subsection{Nonorthogonal Constraints}  
\label{sec:estimation:subsec:nonorth_const}

If the constraint matrix $\bH_R$ (or $\bH_C$) is not orthogonal, we can perform column orthogonalization and standardization, similar to that in \cite{Tsai-Tsay-2010}. Specifically, we obtain 
\[
\bH_R = \bTheta_R \bK_R,
\]
where $\bTheta_R$ is an orthonormal matrix and $\bK_R$ is a $m_1 \times m_1$ upper triangular matrix with nonzero diagonal elements. $\bH_C = \bTheta_C \bK_C$ can be obtained in the same way. 

Letting $\bX_t = \bTheta'_R \bY_t \bTheta_C$, $\bR^*= \bK_R \bR$, and $\bC^*=\bK_C\bC$, 
we have 
\begin{equation}  \label{eqn:cmfm_trans_nonorth}
\bX_t = \bR^* \bF_t \bC^{*'} + \bE_t, \qquad t = 1, 2, \ldots, T, 
\end{equation}
where $\bE_t = \bTheta'_R \bU_t \bTheta_C$. Since $\bE_t$ remains a white noise process, we apply the same estimation method as that in Section \ref{sec:estimation:subsec:orth_const} to obtain $\widehat{\bQ}^*_1$ and $\widehat{\bQ}^*_2$ as the representatives of $\mathcal{M}(\widehat{\bR}^*)$ and $\mathcal{M}(\widehat{\bC}^*)$. Then the estimators of $\bR$ and $\bC$ are $\widehat{\bR} = \bK_R^{-1} \widehat{\bQ}^*_1$ and $\widehat{\bC} = \bK_C^{-1} \widehat{\bQ}^*_2$. Note that $\bK_R$ and $\bK_C$ are invertible lower triangular matrices.

\subsection{Multi-term Constrained Matrix Factor Model}
\label{sec:estimation:subsec:multiterm_const}

Without loss of generality, we assume that both row and column constraint matrices are orthogonal matrices. If $\bH_{R_1}$ and $\bH_{R_2}$ (or $\bH_{C_1}$ and $\bH_{C_2}$) are orthogonal, we obtain, for $t = 1, 2, \ldots, T$,
\begin{eqnarray}  
\bH'_{R_1} \bY_t \bH_{C_1} & = & \bR_1 \bF_{1,t} \bC_1' + \bH'_{R_1} \bU_t \bH_{C_1},   \label{eqn:multi-1-tr-ortho} \nonumber \\
\bH'_{R_2} \bY_t \bH_{C_2} & = & \bR_2 \bF_{2,t} \bC_2' + \bH'_{R_2} \bU_t \bH_{C_2},  \label{eqn:multi-2-tr-ortho} \nonumber
\end{eqnarray}
where $\bH'_{R_1} \bU_t \bH_{C_1}$ and $\bH'_{R_2} \bU_t \bH_{C_2}$ are white noises. The estimators of $\widehat{\bR}_1$, $\widehat{\bC}_1$, $\widehat{\bF}_{1,t}$, $\widehat{\bR}_2$, $\widehat{\bC}_2$ and $\widehat{\bF}_{2,t}$ can be obtained by applying the estimation procedure described in Section \ref{sec:estimation:subsec:orth_const} to $\bH'_{R_1} \bY_t \bH_{C_1}$ and $\bH'_{R_2} \bY_t \bH_{C_2}$, respectively. 

\begin{remark} \label{remark:est_multi_nonorhtogonal}
For multi-term constrained model (\ref{eqn:cmfm-m}), $\bH_{R_1}$ and $\bH_{R_2}$ (or $\bH_{C_1}$ and $\bH_{C_2}$) may not necessarily be orthogonal. In this case, we illustrate the estimation procedure for the column loadings. Define projection matrices $\bP_{\bH^{\perp}_{R_1}} = \bI - \bH_{R_1} \bH'_{R_1}$ and $\bP_{\bH^{\perp}_{R_2}} = \bI - \bH_{R_2} \bH'_{R_2}$, which represent the projections onto the spaces perpendicular to the column spaces of $\bH_{R_1}$ and $\bH_{R_2}$, respectively. Left multiplying Equation (\ref{eqn:cmfm-m}) by $\bP_{\bH^{\perp}_{R_2}}$ and $\bP_{\bH^{\perp}_{R_1}}$, respectively, and taking transpose of the resulting matrices, we have 
\begin{eqnarray*}
\bY'_t \bP_{\bH^{\perp}_{R_2}} & = & \bH_{C_1} \bC_1 \bF'_{1,t} \bR'_1 \bH'_{R_1} \bP_{\bH^{\perp}_{R_2}} + \bU'_t \bP_{\bH^{\perp}_{R_2}}, \\
\bY'_t \bP_{\bH^{\perp}_{R_1}} & = & \bH_{C_2} \bC_2 \bF'_{2,t} \bR'_2 \bH'_{R_2} \bP_{\bH^{\perp}_{R_1}} + \bU'_t \bP_{\bH^{\perp}_{R_1}}, 
\end{eqnarray*}
where $ \bP_{\bH^{\perp}_{R_2}} \bU_t$ and $\bP_{\bH^{\perp}_{R_1}} \bU_t$ are white noises. The column loading estimators $\widehat{\bC}_1$ and $\widehat{\bC}_2$ can be obtained by applying the procedure described in Section \ref{sec:estimation:subsec:orth_const} to $ \bH'_{C_1} \bY'_t \bP_{\bH^{\perp}_{R_2}}$ and $\bH'_{C_2} \bY'_t \bP_{\bH^{\perp}_{R_1}}$, respectively. Note that the $p_1 \times m_1$ matrix $\bP_{\bH^{\perp}_{R_2}} \bH_{R_1}$ is no longer full rank or orthonormal. However, the row and column loading spaces and latent factors can be fully recovered if the dimension of the reduced constrained loading spaces still larger than the dimensions of the latent factor spaces. However, the rates of convergence will change. For example, the rate of convergence of $\widehat{\bC}_1$ will depend on $\norm{\bP_{\bH^{\perp}_{R_2}} \bH_{R_1} \bR_1}^2_2$ instead of $\norm{\bH_{R_1} \bR_1}^2_2$. 
\end{remark}

\subsection{Partially Constrained Matrix Factor Model}
\label{sec:estimation:subsec:partial_const}

For the partially constrained matrix factor model (\ref{eqn:pcmfm_explicit}), we assume that $\bH_{R_1}' \bH_{R_2} = \utwi{0}$ and $\bH_{C_1}' \bH_{C_2} = \utwi{0}$. Define the  transformation $\bX^{(lk)}_t = \bH'_{R_l} \bY_t \bH_{C_k}$ for $l,k= 1,2$. Then the transformed data follow the structure, 
\begin{equation}  \label{eqn:pcmfm_trans_orth}
\bX^{(lk)}_t = \bR_l \bF_{lk,t} \bC_k' + \bE^{(lk)}_t, \quad l,k = 1,2,   \nonumber
\end{equation}
where $\bE^{(lk)}_t = \bH'_{R_l} \bU_t \bH_{C_k}$ remains a white noise process. 

Let $\bM^{(lk)}$ represent the $\bM$ matrix defined in (\ref{eqn:M_def}) for each $\bX^{(lk)}_t$, $l,k=1,2$. Define $\bM^{(l\cdot)} = \sum_{k=1}^{2} \bM^{(lk)} $ for $l=1,2$, then 
\begin{equation} 
\bM^{(l\cdot)} = \bQ^{(l)}_1 \left\{ \sum_{k=1}^{2} \sum_{h=1}^{h_0} \sum_{i=1}^{m_2} \sum_{j=1}^{m_2} \bOmega^{(lk)}_{zq, ij}(h) \bOmega^{(lk)}_{zq, ij}(h)'  \right\} \bQ^{(l)'}_1 , \quad l=1,2, \label{eqn:M_partial}
\end{equation}
has the same column space as that of $\bR_l$, for $l=1, 2$, respectively. 

The estimators of $\widehat{\bR}_l$, $l=1,2$, can be obtained by applying eigen-decomposition on the sample version of $\bM^{(l\cdot)}$ defined similarly to (\ref{eqn:Mhat_def}). $\bC_k$, $k = 1,2$, can be obtained by using the same procedure on the transposes of $\bX^{(lk)}_t$ for $l,k= 1,2$. In the special case of Model (\ref{eqn:pcmfm_explicit_simple}) if $\bF_{21,t} = \mathbf{0}$ and $\bF_{12,t} = \mathbf{0}$, the above estimation is essentially the same procedure as those described in Section \ref{sec:estimation:subsec:orth_const} applying to $\bX^{(ll)}_t$ for $l= 1,2$. 

This procedure effectively projects the observed matrix time series $\bY_t$ into four orthogonal subspaces, based on the constraints obtained from the domain knowledge or some empirical procedure. Because $\bX^{(lk)}_t, l,k=1,2$ are orthogonal, they can be analyzed separately. In our setting, we divide a $p_1 \times p_1$ ambient space of row loading matrix into two orthogonal $p_1 \times m_1$ and $p_1 \times (p_1 - m_1)$ subspaces. The estimation procedure for the partially constrained model ensures the structural requirement that $\bX^{(l1)}_t$ and $\bX^{(l2)}_t$ share the same row loading matrix for the same $l$  without sacrificing the dimension reduction benefit from column space division. More generally, we could divide the space of loading matrix into more than two parts to accommodate each application. Under this partially constrained model, the orthogonality assumption between $\bF_{lk,t}, l,k=1,2$ is not important as they are latent variables.

\begin{remark}
In situations when the prior or domain knowledge captures most major factors, it is reasonable to assume that $m_i$ grows slower than $p_i$ and the row (column) factor strength (defined in Condition~6 in Section~\ref{sec:theory} ) of the main factor $\bF_{11,t}$ is no weaker than that of the remainder factor $\bF_{22,t}$. Improved estimators of $\hat{\bR}_l$, $l=1,2$, can be obtained by applying eigen-decomposition on the sample version of $\bM^{(l1)}$ defined similarly to (\ref{eqn:Mhat_def}). Improved estimators of $\hat{\bC_k}$, $k = 1,2$, can be obtained by using the same procedure on the transposes of $\bX^{(1k)}_t$ for $k= 1,2$. 
\end{remark}

\section{Theoretical Properties}
\label{sec:theory}

In this section, we present the convergence rates of the estimators under the setting that $p_1$, $p_2$, $m_1$, $m_2$ and $T$ all go to infinity while the dimensions $k_1$, $k_2$ and the structure of the latent factor are fixed over time. 
In what follows, let $\norm{\bA}_2$, $\norm{\bA}_F$ and $\norm{\bA}_{min}$ denote the 
spectral norm, Frobenius norm, and the smallest nonzero singular value of $\bA$, respectively. 
When $\bA$ is a square matrix, we denote by $tr(\bA)$, $\lambda_{max}(\bA)$ and $\lambda_{min}(\bA)$ the trace, maximum and minimum eigenvalues of the matrix $\bA$, respectively. 
For two sequences $a_N$ and $b_N$, we write $a_N \asymp b_N$ if $a_N = O(b_N)$ and $b_N = O(a_N)$.

The asymptotic convergence rates are significantly different from those in 
\cite{wang2018factor} due to the constraints. The results reveal more clearly the impact 
of the constraints on signals and noises and the interaction between them. 
We only consider the case of the orthogonal constrained model (\ref{eqn:cmfm}). Asymptotic properties of nonorthogonal, multi-term, and partially constrained matrix factor model are trivial extensions. 

Several regularity conditions (Conditions 1 to 5) are listed in the Appendix. They are similar to those in \cite{wang2018factor} and are used to derive the limiting behavior of (\ref{eqn:Mhat_def}) towards its population version. 
The following condition requires some discussion. 
\vspace{-0.1in}
\paragraph*{Condition 6.} \label{cond:factor_strength}
\textbf{Factor Strength.} There exist constants $\delta_1$ and $\delta_2$ in $[0, 1]$ such that $\norm{\bH_R \bR}^2_2 \asymp p_1^{1-\delta_1} \asymp \norm{\bH_R \bR}^2_{min}$ and $\norm{\bH_C \bC}^2_2 \asymp p_2^{1-\delta_2} \asymp \norm{\bH_C \bC}^2_{min}$. 
\vspace{0.15in}

Since only $\bY_t$ is observed in Model (\ref{eqn:cmfm}), how well we can recover the factor $\bF_t$ from $\bY_t$ depends on the `factor strength' reflected by the coefficients in the row and column factor loading matrices $\bH_R \bR$ and $\bH_C \bC$. For example, in the 
case of $\bH_R \bR= \utwi{0}$ or $\bH_C \bC=\utwi{0}$, 
$\bY_t$ carries no information on $\bF_t$. In the following, 
we assume $\norm{\bF_t}$ does not change as $p_1$, $p_2$, $m_1$, and $m_2$ change. 

The rates $\delta_1$ and $\delta_2$ in \nameref{cond:factor_strength} are called the strength for the row factors and the column factors, respectively. If $\delta_1 = 0$, the corresponding row factors are called strong factors because Condition 6 implies that the factors have impacts on the majority of $p_1$ vector time series. The amount of information that observed process $\bY_t$ carries about the strong factors increases at the same rate as the number of observations or the amount of noise increases. If $\delta_1>0$, the row factors are weak, which means that the information contained in $\bY_t$ about the factors grows more slowly than the noises introduced as $p_1$ increases. The smaller the $\delta's$, the stronger the factors. In the strong factor case, the loading matrix is dense. See \cite{Lam-Yao-2011} for further discussions. 

If we restrict $\bH_R$ to be orthonormal, $||\bH_R \bR||_2^2 = ||\bR||_2^2 \asymp p_1^{1-\delta_1}$ and there is an interplay between $\bH_R$ and $\bR$ as $p_1$ increases. In order for $\bH_R$ to remain orthonormal, when $p_1$ increases, each element of $\bH_R$ decreases at the rate of $p_1^{-1/2}$. At the same time, each element of $\bR$ on average increases at the rate of $\sqrt{p_1^{1-\delta_1}/m_1}$. The column factor loading $||\bH_C \bC||_2^2$ behaves in the same way. As $p_1$ and $p_2$ increase, each element of the transformed error $\bE_t$ remains a growth rate of $1$ under Condition 3 (see Lemma~1 in Appendix~A, 
but the dimension of $\bE_t$ is $m_1 \times m_2$ which grows at a slower rate than $p_1 \times p_2$. The factor strength is defined in terms of the observed dimension $p_1$ and $p_2$ and the overall loading matrices $\bH_R\bR$ and $\bH_C\bC$, but clearly how $m_1$ and $m_2$ increase with $p_1, p_2$ is also important because it controls the signal-noise ratio in the constrained model. 




We have the following theorems for the constrained matrix factor model. Asymptotic properties for the multi-term and the partially constrained models are similar and can be derived easily. 

\begin{theorem} \label{theorem:conv_rate_loading_mat}
Under Conditions 1-6 and $m_1 p_1^{-1+\delta_1} m_2 p_2^{-1+\delta_2} T^{-1/2} = o(1)$, as $m_1$, $p_1$, $m_2$, $p_2$, and $T$ go to $\infty$, it holds that
\begin{eqnarray}
& \norm{ \widehat{\bQ}_1 - \bQ_1 }_2 & = O_p \left( \max \left( T^{-1/2}, \; \frac{m_1}{p_1^{1-\delta_1}} \frac{m_2}{p_2^{1-\delta_2}}  T^{-1/2} \right) \right)  \nonumber , \\
& \norm{ \widehat{\bQ}_2 - \bQ_2 }_2 & = O_p \left( \max \left( T^{-1/2}, \; \frac{m_1}{p_1^{1-\delta_1}} \frac{m_2}{p_2^{1-\delta_2}}  T^{-1/2} \right) \right). \nonumber
\end{eqnarray}
\end{theorem}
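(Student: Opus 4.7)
My plan is to reduce the theorem to the classical eigenvector perturbation inequality (a sin--theta / Weyl-type bound) applied to the symmetric positive semidefinite matrices $\bM$ and $\widehat{\bM}$ of (\ref{eqn:M_def}) and (\ref{eqn:Mhat_def}). Because $\bM = \bQ_1 \bXi \bQ_1'$ with the $k_1 \times k_1$ inner kernel
$\bXi := \sum_{h=1}^{h_0} \sum_{i,j=1}^{m_2} \bOmega_{zq,ij}(h)\bOmega_{zq,ij}(h)'$
of rank exactly $k_1$ under Condition 2, the columns of $\bQ_1$ are unit eigenvectors of $\bM$ for its $k_1$ nonzero eigenvalues, so a standard perturbation bound gives
$$
\|\widehat{\bQ}_1 - \bQ_1\|_2 \;\le\; C\,\frac{\|\widehat{\bM} - \bM\|_2}{\lambda_{k_1}(\bM)}.
$$
The entirely analogous argument on the transposed series delivers the bound for $\widehat{\bQ}_2$, and the hypothesis $m_1 p_1^{-1+\delta_1} m_2 p_2^{-1+\delta_2} T^{-1/2} = o_p(1)$ is exactly what is needed to keep the perturbation strictly below $\lambda_{k_1}(\bM)$ and so validate that step. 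Thus everything reduces to (i) lower-bounding $\lambda_{k_1}(\bM)$ and (ii) upper-bounding $\|\widehat{\bM} - \bM\|_2$.

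For (i), orthonormality of $\bH_R, \bH_C$ and the QR decompositions $\bR = \bQ_1 \bW_1$, $\bC = \bQ_2 \bW_2$ give $\|\bW_1\|_2^2 = \|\bR\|_2^2 = \|\bH_R \bR\|_2^2 \asymp p_1^{1-\delta_1}$ and $\|\bW_2\|_2^2 \asymp p_2^{1-\delta_2}$ by Condition 6. Substituting $\bZ_t = \bW_1 \bF_t \bW_2'$ in $\bOmega_{zq,ij}(h)$ factors it as $\bW_1 \bSigma_{F,ij}(h) \bW_1'$ for a $k_1 \times k_1$ kernel built from the second-moment structure of $\bF_t$ weighted by rows of $\bQ_2 \bW_2$; the orthonormality identity $\sum_j Q_{2,j\cdot}' Q_{2,j\cdot} = \bI_{k_2}$ keeps the double sum over $i,j$ clean. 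Combined with Conditions 1 and 2 (nondegeneracy of the factor cross-covariances) this yields $\lambda_{k_1}(\bXi) \asymp \|\bW_1\|_2^4 \|\bW_2\|_2^4$, and therefore
$$
\lambda_{k_1}(\bM) \asymp p_1^{2(1-\delta_1)} p_2^{2(1-\delta_2)}.
$$

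For (ii), I would substitute $\bX_t = \bQ_1 \bZ_t \bQ_2' + \bE_t$ into $\widehat{\bOmega}_{x,ij}(h)$ and expand into four pieces: signal--signal, signal--noise, noise--signal, and noise--noise. At lag $h \ge 1$ the three noise-containing pieces have zero expectation by the whiteness of $\bE_t$ and its independence from $\bF_t$, so $\widehat{\bM} - \bM$ collapses to stochastic fluctuations of these four sums. Using Conditions 3--5 (moments and $\alpha$-mixing of $\bF_t$ and $\bU_t$) and the fact that $\|\bE_t\|_F \le \|\bU_t\|_F$ by the orthonormality of $\bH_R, \bH_C$, a standard time-series variance calculation yields each $\widehat{\bOmega}_{x,ij}(h) - \bOmega_{x,ij}(h)$ of spectral order $\bigl(p_1^{1-\delta_1} p_2^{1-\delta_2} + m_1 m_2 \bigr) T^{-1/2}$; summing the $h_0 m_2^2$ terms and forming $\widehat{\bOmega}\widehat{\bOmega}'$ then gives
$$
\|\widehat{\bM} - \bM\|_2 = O_p\!\left(\, p_1^{2(1-\delta_1)} p_2^{2(1-\delta_2)}\,T^{-1/2} \;+\; m_1 m_2\, p_1^{1-\delta_1} p_2^{1-\delta_2}\, T^{-1/2}\,\right),
$$
the cross pieces being dominated by the geometric mean of the two pure terms via Cauchy--Schwarz. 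Dividing by $\lambda_{k_1}(\bM)$ produces the advertised rate $O_p\!\bigl(\max(T^{-1/2},\; \tfrac{m_1 m_2}{p_1^{1-\delta_1} p_2^{1-\delta_2}}\, T^{-1/2})\bigr)$.

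The main obstacle I expect is the bookkeeping in step (ii). After the transformation $\bX_t = \bH_R'\bY_t\bH_C$ the noise lives on the reduced $m_1 \times m_2$ scale while the signal strength is expressed on the original $p_1 \times p_2$ scale, so one has to track, term by term in the four-way expansion and after squaring into the outer product $\widehat{\bOmega}\widehat{\bOmega}'$, how each product of a signal factor and a compressed-noise factor scales in $m_i$, $p_i$, $\delta_i$ and $T$. The signal--noise cross pieces are the most delicate because, a priori, they could dominate when factors are weak ($\delta_i$ close to $1$); the independence of $\bF_t$ and $\bU_t$ (Condition 4) together with a Cauchy--Schwarz bound by the geometric mean of the two pure pieces is what keeps these cross terms subdominant and preserves the stated rate.
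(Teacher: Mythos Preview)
Your proposal is correct and follows essentially the same route as the paper: reduce to the eigenvector perturbation bound $\|\widehat{\bQ}_1 - \bQ_1\|_2 \le C\,\|\widehat{\bM}-\bM\|_2/\lambda_{k_1}(\bM)$ (the paper invokes Lemma~3 of Lam--Yao (2011)), lower-bound $\lambda_{k_1}(\bM)\asymp p_1^{2(1-\delta_1)}p_2^{2(1-\delta_2)}$ via Condition~6, and upper-bound $\|\widehat{\bM}-\bM\|_2$ by the same four-way (signal/noise) decomposition of $\widehat{\bOmega}_{x,ij}(h)$ combined with Cauchy--Schwarz across the $(i,j)$ sum. The only caveat is that your stated \emph{per-term} spectral order $(p_1^{1-\delta_1}p_2^{1-\delta_2}+m_1 m_2)T^{-1/2}$ is not uniform in $(i,j)$; the paper instead bounds the aggregate $\sum_{i,j}\|\cdot\|_2^2$ directly using $\sum_i\|C_{i\cdot}\|_2^2=\|\bC\|_F^2\asymp p_2^{1-\delta_2}$, which is precisely the ``collapsing'' you allude to and is what prevents an extraneous $m_2^2$ factor from appearing.
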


\begin{remark}
The convergence rate for the unconstrained model is $p_1^{\delta_1} p_2^{\delta_2} T^{-1/2}$ in \cite{wang2018factor}. The rates for the constrained model under different relations between $m_1m_2$ and $p_1p_2$ are shown in Table \ref{table:conv_rate_spdist}. 
\begin{table}[htpb!]
\centering
\resizebox{0.9\textwidth}{!}{
  \begin{tabular}{c|c|c|c} \hline
    & $m_1m_2 \asymp p_1p_2$ & $p_1^{1-\delta_1}p_2^{1-\delta_2} = o(m_1m_2)$ & $m_1m_2 = o(p_1^{1-\delta_1}p_2^{1-\delta_2})$ \\ \hline
   rate & $p_1^{\delta_1} p_2^{\delta_2} T^{-1/2}$ & $m_1 m_2 p_1^{-1} p_2^{-1} \cdot p_1^{\delta_1} p_2^{\delta_2} T^{-1/2}$ & $T^{-1/2}$ \\ \hline
  \end{tabular} }
\caption{Convergence rate of the loading space estimator.}
\label{table:conv_rate_spdist} 
\end{table}

For strong factors with $\delta_1 = \delta_2 = 0$, the convergence rates are the same for the constrained and unconstrained models. However, $m_1 p_1^{-1+\delta_1} m_2 p_2^{-1+\delta_2} T^{-1/2} = o(1)$ is automatically satistied since $p_i >> m_i$. Also, the constrained models have much smaller number of parameters, hence potentially have higher efficiency. 

For weak factors, the constrained models have better convergence rate in most cases.  
It depends on the growth rate of the ratio between $m_1 m_2$ and $p_1^{1-\delta_1} p_2^{1-\delta_2}$. The smaller the ratio, the faster the convergence rate. It can be viewed as strength gained due to the constraints. 
For example, when $m_1= p_1^{\alpha_1}$ and $m_2 = p_2^{\alpha_2}$, the convergence rate is $p_1^{\delta_1+\alpha_1-1} p_2^{\delta_2+\alpha_2-1} T^{-1/2}$, and we achieve a better rate than that of the unconstrained case if $\alpha_1 < 1$ or $\alpha_2 < 1$. It effectively increases the strength from $\delta_1$ and $\delta_2$ to $\delta_1-(1-\alpha_1)$ and $\delta_2-(1-\alpha_2)$, respectively. Hence, the constraints are particularly useful for weak strength cases. 

When $m_1 m_2 = O(p_1^{1-\delta_1} p_2^{1-\delta_2})$, we achieve the optimal rate $O_p \left( T^{-1/2} \right)$. Note the unconstrained model can only achieve this rate in the case of strong factor. The constrained model can achieve the optimal rate even in the weak factor case. 
A special case is when the dimensions of the constrained row and column loading spaces $m_1$ and $m_2$ are fixed, the convergence rate is $T^{-1/2}$ regardless of the factor strength condition.
Increasing $p_1$ or $p_2$ while keeping $m_1$ and $m_2$ fixed amounts to increasing the sample points in the constrained spaces. When the constrained spaces are properly specified, the additional information introduced from more sample points will accrue and translate into the transformed signal part in (\ref{eqn:cmfm_trans_orth}), while the transformed noise gets canceled out by averaging. 
However, the convergence rate is still bounded below by the convergence rate of the estimated covariance matrix.
When $m_1m_2 \asymp p_1p_2$, the convergence rates of the constrained and unconstrained models are the same. A special case is when $m_1= c_1 p_1$ and $m_2 = c_2 p_2$, that is, the dimensions of the constrained loading spaces increase with $p$'s linearly.
\end{remark}

\begin{remark}
Under some conditions the convergence rates in Theorem \ref{theorem:conv_rate_loading_mat} may improve significantly. For example, if $\Sigma_u \equiv Var(vec(\bU_t))$ is diagonal (i.e. $U_{t,ij}$ and $U_{t,lk}$ are uncorrelated for $(i,j) \ne (l,k)$) and if we have the grouping constraints (i.e. $\bH_R^{(1)}$ in Section \ref{sec:constraints}), then each elements in $\bE_t$ is a group average. $Var(\bE_{t,ij})$ is smaller by a factor of $\frac{m_1m_2}{p_1p_2}$ and goes to zero when $\frac{m_1m_2}{p_1p_2} = o(1)$.
\end{remark}

\begin{remark}
If the constraints are correct, it would induce certain intrinsic sparsity in the auto-cross-correlation matrix under the unconstrained model. 
For such intrinsic sparsity conditions, we may instead use thresholding estimator for large covariance matrix by \cite{bickel2008covariance} in (\ref{eqn:M_def}). This will lead to faster convergence rates. See Section 3.2 of \cite{chang2018principal}.
By explicitly incorporating constraints in the model, the loading matrix is condensed and the sparsity issue becomes less serious. 
\end{remark}

\begin{remark}
The factors under our definition contain the classic ``common factors'' and the serially correlated idiosyncratic components. As shown by the theoretical properties and simulation studies, the constrained matrix factor model helps identify the weak factors. However, the method is still limited in the sense that it can only identify ``common" factors of some strength $\delta < 1$. In the case of $\delta=1$, although the loading spaces can still be consistently estimated with very large $T$ ($pT^{-1/2}=o(1)$), the factor itself cannot be consistently estimated. Therefore, serially correlated idiosyncratic components for which $\delta=1$ are left ``erroneously" in the noise in application. Hopefully, the constraints may improve the effective factor strength.
\end{remark}


\begin{theorem} \label{theorem:conv_rate_eigval_M}
Under Conditions 1-6, and if $m_1 p_1^{-1+\delta_1} m_2 p_2^{-1+\delta_2} T^{-1/2} = o(1)$ and the 
$\bM$ matrix in (\ref{eqn:M_def}) has $k_1$ distinct positive eigenvalues, then the eigenvalues $\{ \hat{\lambda}_1, \ldots, \hat{\lambda}_{m_1} \}$ of $\widehat{\bM}$, sorted in the descending order, satisfy 
\begin{eqnarray}
| \hat{\lambda}_j - \lambda_j | & = & O_p \left( \max \left( p_1^{2-2\delta_1} p_2^{2-2\delta_2}, \; m_1 p_1^{1-\delta_1} m_2 p_2^{1-\delta_2} \right) \cdot T^{-1/2} \right), \quad for \quad j = 1, 2, \ldots, k_1,  \nonumber \\
\qquad | \hat{\lambda}_j | & = & O_p \left( \max \left( p_1^{2-2\delta_1} p_2^{2-2\delta_2}, \; m_1^2 m_2^2 \right) \cdot T^{-1} \right), \quad for \quad j = k_1+1, \ldots, m_1,  \nonumber
\end{eqnarray}
where $\lambda_1 > \lambda_2 > \cdots > \lambda_{m_1}$ are the eigenvalues of $\bM$.
\end{theorem}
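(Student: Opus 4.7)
The plan is to combine matrix perturbation tools: Weyl's inequality for the top $k_1$ eigenvalues, and a Davis--Kahan-type subspace bound combined with an exact projection identity for the $m_1 - k_1$ trailing (zero) eigenvalues. The key ingredients are a sharp rate for $\norm{\widehat\bM - \bM}_2$, the signal-strength lower bound $\lambda_{k_1} \asymp p_1^{2-2\delta_1}p_2^{2-2\delta_2}$ (which follows from Condition 6), and the exact identity $(\bQ_1^\perp)'\bR = \bzero$ that holds because the column space of $\bR$ equals $\mathcal{M}(\bQ_1)$.

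First I would control $\norm{\widehat\bM - \bM}_2$. Writing $\bX_t = \bR \bF_t \bC' + \bE_t$ and using the whiteness of $\bE_t$ together with its independence from $\bF_t$, one obtains for $h \ge 1$ the population identity $\bOmega_{x,ij}(h) = \bR\, Cov(\bF_t C_{i\cdot}', \bF_{t+h} C_{j\cdot}')\bR'$, which is of rank at most $k_1$ with column space $\mathcal{M}(\bQ_1)$. The sampling error $\widehat\bOmega_{x,ij}(h) - \bOmega_{x,ij}(h)$ decomposes into signal--signal, signal--noise, noise--signal, and noise--noise sample averages, each controlled via Lemmas 1--2 of the Appendix together with Condition 6 (giving $\norm{\bR}_2^2 \asymp p_1^{1-\delta_1}$, $\norm{\bC}_2^2 \asymp p_2^{1-\delta_2}$). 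Expanding $\widehat\bOmega\widehat\bOmega' - \bOmega\bOmega' = (\widehat\bOmega - \bOmega)\bOmega' + \widehat\bOmega(\widehat\bOmega - \bOmega)'$ and summing over $h,i,j$, the two dominant contributions yield
\[
\norm{\widehat\bM - \bM}_2 = O_p\!\Bigl(\max\bigl(p_1^{2-2\delta_1}p_2^{2-2\delta_2},\ m_1 p_1^{1-\delta_1} m_2 p_2^{1-\delta_2}\bigr)T^{-1/2}\Bigr),
\]
with the first term arising from signal-times-fluctuation and the second from signal-times-noise cross averages (the $m_i$ factors coming from summing over the $m_2$ column indices in \eqref{eqn:Mhat_def} and the $m_1$-dimensional transformed noise). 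Weyl's inequality $|\hat\lambda_j - \lambda_j| \le \norm{\widehat\bM - \bM}_2$ then yields the first claim for $j \le k_1$.

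For $j > k_1$, the Weyl bound alone is insufficient. I would write $\hat\lambda_j = \widehat\bq_j' \widehat\bM \widehat\bq_j$ and decompose $\widehat\bq_j = \bQ_1 \alpha_j + \bQ_1^\perp \beta_j$ with $\norm{\alpha_j}^2 + \norm{\beta_j}^2 = 1$. The Davis--Kahan Sin-Theta argument (the same device used implicitly in Theorem 1) gives $\norm{\alpha_j} = O_p(\norm{\widehat\bM - \bM}_2/\lambda_{k_1})$. The quadratic form splits into a $\bQ_1$-part of magnitude $\norm{\alpha_j}^2 \cdot O_p(\lambda_1) \asymp \norm{\widehat\bM - \bM}_2^2/\lambda_{k_1} = O_p(\max(p_1^{2-2\delta_1}p_2^{2-2\delta_2},\, m_1^2 m_2^2) T^{-1})$, a cross-term of at most the same order by Cauchy--Schwarz, and a $\bQ_1^\perp$-part. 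For the last piece the exact identity $(\bQ_1^\perp)'\bOmega_{x,ij}(h) = \bzero$ eliminates the signal--signal contribution, leaving
\[
(\bQ_1^\perp)'\widehat\bOmega_{x,ij}(h) = \frac{1}{T-h}\sum_{t=1}^{T-h}(\bQ_1^\perp)'E_{t,\cdot i}\bigl(C_{j\cdot}\bF_{t+h}'\bR' + E_{t+h,\cdot j}'\bigr),
\]
a zero-mean sample average of order $T^{-1/2}$; squaring and summing over $h,i,j$ bounds the $\bQ_1^\perp$-part by $O_p(\max(m_1 m_2 p_1^{1-\delta_1} p_2^{1-\delta_2},\, m_1^2 m_2^2) T^{-1})$, which is dominated by the first piece (using $m_1 m_2 p_1^{1-\delta_1} p_2^{1-\delta_2} \le \max(p_1^{2-2\delta_1}p_2^{2-2\delta_2},\, m_1^2 m_2^2)$). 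Combining the three pieces gives the second claim.

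The principal obstacle is the sharper $T^{-1}$ rate for the trailing eigenvalues, which Weyl alone cannot deliver; the squared improvement requires both the Davis--Kahan subspace bound (which imports $\lambda_{k_1}$ from Condition 6) and the algebraic identity $(\bQ_1^\perp)'\bR = \bzero$ (which kills all signal contributions upon projection to the null space). Probabilistically, no tools beyond the mixing and concentration apparatus already developed for Theorem 1 are required; the bulk of the work lies in the algebraic bookkeeping that tracks how summations over column indices $i,j\in\{1,\ldots,m_2\}$, the row norms $\norm{C_{j\cdot}}$, and the variances of cross-product averages combine to produce the specific $m_i$-versus-$p_i^{1-\delta_i}$ trade-offs visible in both assertions.
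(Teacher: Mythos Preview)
Your proposal is correct and relies on the same three ingredients as the paper: the rate for $\norm{\widehat\bM-\bM}_2$ (Lemma~6), the eigenvalue scale $\lambda_{k_1}\asymp p_1^{2-2\delta_1}p_2^{2-2\delta_2}$ (Lemma~7), and the algebraic identity $(\bQ_1^\perp)'\bR=\bzero$. The organization, however, differs in two places. For $j\le k_1$ you invoke Weyl's inequality directly, which is shorter than the paper's five-term expansion $\widehat\lambda_j-\lambda_j=I_1+\cdots+I_5$ in terms of $\widehat\bq_j-\bq_j$; Weyl gives the same bound without appealing to individual eigenvector perturbations and does not even need the distinct-eigenvalue hypothesis. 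For $j>k_1$ you decompose $\widehat\bq_j=\bQ_1\alpha_j+\bQ_1^\perp\beta_j$ and bound the three resulting blocks of $\widehat\bq_j'\widehat\bM\widehat\bq_j$, whereas the paper introduces the auxiliary matrix $\widetilde\bM=\sum\widehat\bOmega_{ij}\bOmega_{ij}'$ and writes $\widehat\lambda_j=K_1+K_2+K_3$ with $K_1=\widehat\bq_j'(\widehat\bM-\widetilde\bM-\widetilde\bM'+\bM)\widehat\bq_j$, exploiting $\bOmega_{ij}'\bq_j=\bzero$. Your $\bQ_1^\perp$-block is exactly the paper's $K_1$ restricted to the null space, and your Cauchy--Schwarz step for the cross block (legitimate because $\widehat\bM$ is positive semidefinite) plays the role of $K_2$. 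Your route is arguably cleaner in that it avoids having to speak of individual $\bq_j$ for $j>k_1$, which are not canonically defined since the null space of $\bM$ has dimension $m_1-k_1$; the paper handles this implicitly by choosing $\bB_1$ close to $\widehat\bB_1$.
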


Theorem \ref{theorem:conv_rate_eigval_M} shows that the estimators of the nonzero eigenvalues of $\bM$ converge more slowly than those of the zero eigenvalues. 
This provides the theoretical support for the ratio-based estimator of the number of factors described in Section \ref{sec:estimation:subsec:orth_const}. 
The assumption that $\bM$ has $k_1$ distinct positive eigenvalues is not essential, yet it substantially simplifies the presentation and the proof of the convergence properties. 

The convergence rates for the unconstrained model are $\Delta^{\lambda}_{pT} \equiv p_1^{2-\delta_1}p_2^{2-\delta_2}T^{-1/2}$ for the non-zero eigenvalues and $p_1^{\delta_1}p_2^{\delta_2} T^{-1/2} \cdot \Delta^{\lambda}_{pT}$ for the zero eigenvalues, respectively. See \cite{wang2018factor}. The rates for the constrained model under different relations between $m_1m_2$ and $p_1p_2$ are shown in Table \ref{table:conv_rate_eigval}. 

\begin{table}[htbp!]
\centering
\resizebox{0.9\textwidth}{!}{
  \begin{tabular}{c|c|c|c} \hline
    $\lambda_j$ & $m_1m_2 \asymp p_1p_2$ & $p_1^{1-\delta_1}p_2^{1-\delta_2} = o(m_1m_2)$ & $m_1m_2 = o(p_1^{1-\delta_1}p_2^{1-\delta_2})$ \\ \hline    
    Zero & $p_1^{\delta_1}p_2^{\delta_2} T^{-1/2} \cdot \Delta^{\lambda}_{pT}$ & $(\frac{m_1 m_2}{p_1 p_2})^2 p_1^{\delta_1}p_2^{\delta_2} T^{-1/2} \cdot \Delta^{\lambda}_{pT}$ & $p_1^{-\delta_1} p_2^{-\delta_2} T^{-1/2} \cdot \Delta^{\lambda}_{pT}$ \\ \hline
    Non-zero & $\Delta^{\lambda}_{pT}$ & $m_1 m_2 p_1^{-1} p_2^{-1} \cdot \Delta^{\lambda}_{pT}$ & $p_1^{-\delta_1} p_2^{-\delta_2} \cdot \Delta^{\lambda}_{pT}$ \\ \hline \hline
    Ratio & $p_1^{\delta_1}p_2^{\delta_2} T^{-1/2}$ & $m_1 m_2 p_1^{-1+\delta_1} p_2^{-1+\delta_2} T^{-1/2}$ & $T^{-1/2}$ \\ \hline
  \end{tabular} }
\caption{Convergence rate of estimators for non-zero and zero eigenvalues of $\bM$.}
\label{table:conv_rate_eigval}
\end{table}

In the cases of strong factors or weak factors with $m_1 m_2 \asymp p_1p_2$, our result is the same as that of \cite{wang2018factor}. In all other cases, the gap between the convergence rates of nonzero and zero eigenvalues of $\bM$ is larger in the constrained case. 

Let $\bS_t$ be the dynamic signal part of $\bY_t$, i.e. $\bS_t = \bH_R \bR \bF_t \bC' \bH_C' = \bH_R \bQ_1 \bZ_t \bQ_2' \bH_C'$. From the discussion in Section \ref{sec:estimation:subsec:orth_const}, $\bS_t$ can be estimated by 
\begin{equation}
\widehat{\bS}_t = \bH_R \widehat{\bQ}_1 \widehat{\bZ}_t \widehat{\bQ}_2' \bH_C'. \nonumber
\end{equation}
Some theoretical properties of $\widehat{\bS}_t$ are given below: 

\begin{theorem}  \label{theorem:conv_rate_signal}
Under Conditions 1-6 and $m_1 p_1^{-1+\delta_1} m_2 p_2^{-1+\delta_2} T^{-1/2} = o(1)$, we have
\begin{eqnarray}	
\frac{1}{\sqrt{p_1 p_2}} \norm{\widehat{\bS}_t - \bS_t}_2 & = & O_p \left( \max \left( p_1^{-\delta_1/2} p_2^{-\delta_2/2}, \quad m_1 p_1^{-1+\delta_1/2} m_2 p_2^{-1+\delta_2/2} \right)\cdot \frac{1}{\sqrt{T}} + \frac{1}{\sqrt{p_1 p_2}} \right), \nonumber \\ 
& = & \left\{ \begin{array}{l}
O_p \left( p_1^{-\delta_1/2} p_2^{-\delta_2/2} T^{-1/2} + p_1^{-1/2} p_2^{-1/2} \right), \; if \; m_1m_2 = O_p(p_1^{1-\delta_1} p_2^{1-\delta_2}), \\
O_p \left( m_1 p_1^{-1+\delta_1/2} m_2 p_2^{-1+\delta_2/2} T^{-1/2} + p_1^{-1/2} p_2^{-1/2} \right), \; otherwise.
\end{array} \right. \nonumber
\end{eqnarray}
\end{theorem}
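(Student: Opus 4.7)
}

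The plan is to decompose $\widehat{\bS}_t - \bS_t$ into a signal-recovery error and a projected-noise term, and bound each using Theorem \ref{theorem:conv_rate_loading_mat} together with the norm estimates implied by Condition 6. Starting from $\widehat{\bZ}_t = \widehat{\bQ}_1' \bH_R' \bY_t \bH_C \widehat{\bQ}_2$ and $\bH_R' \bY_t \bH_C = \bQ_1 \bZ_t \bQ_2' + \bE_t$, I write
\begin{equation*}
\widehat{\bS}_t - \bS_t \;=\; \bH_R\bigl(\widehat{\bQ}_1\widehat{\bQ}_1'\bQ_1 \bZ_t \bQ_2'\widehat{\bQ}_2\widehat{\bQ}_2' - \bQ_1\bZ_t\bQ_2'\bigr)\bH_C' \;+\; \bH_R\widehat{\bQ}_1\widehat{\bQ}_1'\bE_t\widehat{\bQ}_2\widehat{\bQ}_2'\bH_C'.
\end{equation*}
Since $\bH_R$ and $\bH_C$ are orthonormal, neither changes the spectral norm, so it suffices to bound the bracketed signal term and the noise term separately.

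For the signal term, I insert $\bQ_1\bZ_t\bQ_2' = \bQ_1\bQ_1'\bQ_1\bZ_t\bQ_2'\bQ_2\bQ_2'$ and split into
\begin{equation*}
(\widehat{\bQ}_1\widehat{\bQ}_1' - \bQ_1\bQ_1')\bQ_1\bZ_t\bQ_2'\widehat{\bQ}_2\widehat{\bQ}_2' \;+\; \bQ_1\bZ_t\bQ_2'(\widehat{\bQ}_2\widehat{\bQ}_2' - \bQ_2\bQ_2').
\end{equation*}
Using $\norm{\widehat{\bQ}_i\widehat{\bQ}_i' - \bQ_i\bQ_i'}_2 \lesssim \norm{\widehat{\bQ}_i - \bQ_i}_2$ (with the standard alignment/sign convention), the signal part is bounded by $(\norm{\widehat{\bQ}_1 - \bQ_1}_2 + \norm{\widehat{\bQ}_2 - \bQ_2}_2)\norm{\bZ_t}_2$. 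The key magnitude estimate is $\norm{\bZ_t}_2 \le \norm{\bW_1}_2\norm{\bF_t}_2\norm{\bW_2}_2$; since the QR factorization $\bR = \bQ_1\bW_1$ gives $\norm{\bW_i}_2 = \norm{\bR}_2 = \norm{\bH_R\bR}_2$, Condition 6 and the analogous one for columns yield $\norm{\bZ_t}_2 = O_p(p_1^{(1-\delta_1)/2}p_2^{(1-\delta_2)/2})$. Plugging in the rate from Theorem \ref{theorem:conv_rate_loading_mat} and dividing by $\sqrt{p_1p_2}$ produces $\max(p_1^{-\delta_1/2}p_2^{-\delta_2/2},\, m_1 p_1^{-1+\delta_1/2} m_2 p_2^{-1+\delta_2/2})\cdot T^{-1/2}$, which is precisely the first contribution in the stated bound.

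For the noise term, I use $\norm{\bH_R\widehat{\bQ}_1\widehat{\bQ}_1'\bE_t\widehat{\bQ}_2\widehat{\bQ}_2'\bH_C'}_2 \le \norm{\widehat{\bQ}_1'\bE_t\widehat{\bQ}_2}_F \le \sqrt{k_1 k_2}\,\norm{\widehat{\bQ}_1'\bE_t\widehat{\bQ}_2}_{\max}$. Each entry is a quadratic form $(\widehat{\bq}_{2j}\otimes\widehat{\bq}_{1i})'\,\mathrm{vec}(\bE_t)$ whose variance is at most $\norm{\mathrm{Var}(\mathrm{vec}(\bE_t))}_2 \le \norm{\bH_C\otimes\bH_R}_2^2\,\norm{\bSigma_e}_2 = O(1)$ under the regularity conditions (Condition 3), so the whole noise contribution is $O_p(1)$ in spectral norm. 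Division by $\sqrt{p_1p_2}$ then yields the $(p_1p_2)^{-1/2}$ tail term, and assembling the two pieces gives the displayed rate; the piecewise form follows by comparing $m_1 m_2$ with $p_1^{1-\delta_1}p_2^{1-\delta_2}$ to see which of the two competing terms in the max dominates.

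The main obstacle I expect is the bookkeeping in the signal term: one must carefully match each projection-error factor against the corresponding direction of $\bZ_t$ so that the tightest of the two rates in Theorem \ref{theorem:conv_rate_loading_mat} is actually attained, and ensure that $\norm{\bZ_t}_2$ (which can be quite large in the strong-factor regime) does not inflate the bound. A secondary subtlety is the alignment ambiguity of $\widehat{\bQ}_i$ versus $\bQ_i$ (eigenvectors are defined only up to sign), which is handled either by passing to projection matrices $\widehat{\bQ}_i\widehat{\bQ}_i'$ or by choosing the sign of $\widehat{\bQ}_i$ consistently, so that the entire argument can be phrased in terms of operator-norm bounds that Theorem \ref{theorem:conv_rate_loading_mat} already supplies.
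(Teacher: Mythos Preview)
Your proposal is correct and follows essentially the same route as the paper: the paper likewise splits $\widehat{\bS}_t - \bS_t$ into two signal-projection-error pieces (your $(\widehat{\bQ}_i\widehat{\bQ}_i' - \bQ_i\bQ_i')$ terms) plus the projected-noise term $\bH_R\widehat{\bQ}_1\widehat{\bQ}_1'\bE_t\widehat{\bQ}_2\widehat{\bQ}_2'\bH_C'$, bounds the former via $\norm{\bZ_t}_2$ combined with Theorem~\ref{theorem:conv_rate_loading_mat}, and bounds the latter as $O_p(1)$ using $\norm{(\widehat{\bQ}_2'\otimes\widehat{\bQ}_1')\,\mathrm{vec}(\bE_t)}_2$ together with Condition~3. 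Your explicit justification of $\norm{\bZ_t}_2 = O_p(p_1^{(1-\delta_1)/2}p_2^{(1-\delta_2)/2})$ through the QR factors $\bW_i$ and your entrywise-variance argument for the noise term are exactly the ingredients the paper uses, just written out a bit more carefully.
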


Theorem 3 shows that as long as $m_1m_2 = o(p_1p_2)$ we achieve for weak factor cases a faster convergence rate than $O_p \left(p_1^{\delta_1/2} p_2^{\delta_2/2} T^{-1/2} + p_1^{-1/2} p_2^{-1/2} \right)$ -- the convergence rate of the unconstrained model in \cite{wang2018factor}. When $m_1m_2 = O_p(p_1^{1-\delta_1} p_2^{1-\delta_2})$, we get an even better rate. Note that the estimation of the loading spaces are consistent with fixed $p_1$ and $p_2$ in Theorem 1. But the consistency of the signal estimate requires $p_1, p_2 \rightarrow \infty$.

Asymptotic theories for estimators of nonorthogonal, multi-term constrained factor models are trivial extensions of the above properties for the orthogonal constrained factor model. 

\section{Simulation}
\label{sec:simulation}

In this section, we present some simulation study to illustrate the performance of the estimation methods of Section \ref{sec:estimation} in finite samples. We also compare the results with those of unconstrained models. We employ data generating models with orthogonal full and partial constraints, respectively. In the simulation, we use the Student-$t$ distribution with $5$ degrees of freedom to generate the entries in the disturbances $\bU_t$. Using Gaussian noises shows similar results. 

As noted in Section \ref{sec:estimation}, the row and column factor loading matrices $\bLambda = \bH_R \bR$ and $\bGamma = \bH_C \bC$ are only identifiable up to a linear space spanned by its columns. Following \cite{Lam-Yao-2011} and \cite{wang2018factor}, we adopt the discrepancy measure used by \cite{chang2015high}: for two orthogonal matrices $\bO_1$ and $\bO_2$ of size $p \times q_1$ and $p \times q_2$, then the difference between the two linear spaces $\mathcal{M}(\bO_1)$ and $\mathcal{M}(\bO_2)$ is measured by 
\begin{equation}  \label{eqn:space_dist_def}
\mathcal{D}(\mathcal{M}(\bO_1), \mathcal{M}(\bO_2)) = \left( 1 - \frac{1}{\max (q_1, q_2)} tr \left( \bO_1 \bO_1' \bO_2 \bO_2' \right) \right) ^{1/2}. 
\end{equation}

Clearly, $\mathcal{D}(\mathcal{M}(\bO_1), \mathcal{M}(\bO_2))$ assumes values in [0,1]. It equals to $0$ if and only if $\mathcal{M}(\bO_1) = \mathcal{M}(\bO_2)$ and equals to $1$ if and only if $\mathcal{M}(\bO_1) \perp \mathcal{M}(\bO_2)$. If $\bO_1$ and $\bO_2$ are vectors, (\ref{eqn:space_dist_def}) is the cosine similarity measure. We report this space distance $\mathcal{D}(\cdot, \cdot)$ as a measurement of the discrepency between estimated and true loading spaces. 

\subsection{Case 1. Orthogonal Constraints}   \label{sec:simulation:subsec:orth_cons}

In this case, the observed data $\bY_t$'s are generated according to Model (\ref{eqn:cmfm}), 
\begin{equation*}
\bY_t = \bH_R \bR \bF_t \bC' \bH_C' + \bU_t, \qquad t = 1, \ldots, T,
\end{equation*} 
under the following design.

The latent factor process $\bF_t$ is of dimension $k_1 \times k_2 = 3 \times 2$. The entries of $\bF_t$ follow $k_1 k_2$ independent $AR(1)$ processes with Gaussian white noise $\mathcal{N}(0,1)$ innovations. Specifically, $vec(\bF_t) = \bPhi_F \, vec(\bF_{t-1}) + \boldsymbol{\epsilon}_t$ with $\bPhi_F = diag(-0.5, 0.6, 0.8, -0.4, 0.7, 0.3)$. The dimensions of the constrained row and column loading spaces are $m_1=12$ and $m_2=3$, respectively. Hence, $\bR$ is  $12 \times 3$ and $\bC$ is $3 \times 2$. The entries of $\bR$ and $\bC$ are independently sampled from the uniform distribution $U(- p_i^{-\delta_i/2} \sqrt{m_i/p_i}, \; p_i^{-\delta_i/2} \sqrt{m_i/p_i})$ for $i = 1, 2$, respectively, so that the condition on the factor strength is satisfied. The disturbance $\bU_t = \bPsi^{1/2} \bXi_t$ is a white noise process, where the elements of $\bXi_t$ are independent random variables of Student-$t$ distribution with five degrees of freedom and the matrix $\bPsi^{1/2}$ is chosen so that $\bU_t$ has a Kronecker product covariance structure $cov(vec(\bU_t)) = \bGamma_2 \otimes \bGamma_1$, where $\bGamma_1$ and $\bGamma_2$ are of size $p_1 \times p_1$ and $p_2 \times p_2$ respectively. For $\bGamma_1$ and $\bGamma_2$, 
the diagonal elements are 1 and the off-diagonal elements are 0.2. 

The effects of factor strength are investigated by varying factor strength parameter $(\delta_1, \delta_2)$ among $(0, 0)$, $(0.5, 0)$, $(0.5, 0.5)$. For each pair of $\delta_i$'s, the dimensions $(p_1, p_2)$ are chosen to be $(20, 20)$, $(20, 40)$, $(40, 20)$ and $(40, 40)$. The sample sizes $T$ are $0.5 p_1 p_2$, $p_1 p_2$, $1.5 p_1 p_2$ and $2 p_1 p_2$. For each combination of the  parameters, we use 500 realizations. And we use $h_0=1$ for all simulations. The estimation error of $\mathcal{M}(\widehat{\bQ}_i)$ is defined as $\mathcal{D}(\widehat{\bQ}_i, \bQ_i)$, where the distance $\mathcal{D}$ is defined in (\ref{eqn:space_dist_def}). 

The row constraint matrix $\bH_R$ is a $p_1 \times 12$ orthogonal matrix. For $p_1=20$, $\bH_R$ is assumed to be a block diagonal matrix $\bI_4 \otimes \bD$, where $\bI_k$ is the identify matrix of dimension $k$ and $\bD = [\utwi{d}_1, \utwi{d}_2, \utwi{d}_3]$ is a $5 \times 3$ matrix with $\utwi{d}_1'=(1,1,1,1,1)/\sqrt{5}$, $\utwi{d}_2'=(-1,-1,0,1,1)/2$, $\utwi{d}_3'=(-1,0,2,0,-1)/\sqrt{6}$. These three $\utwi{d}_j$ vectors can be viewed as the level, slope and curvature, respectively, of a group of five variables. Therefore, the 20 rows are divided into 4 groups of size 5. When we increase $p_1$ to $40$ while keeping $m_1 = 12$ fixed, we double the length of each vector in the columns of $\bD$, using $\utwi{d}_1'=(1,1,1,1,1,1,1,1,1,1)/\sqrt{10}$, $\utwi{d}_2'=(-1,-1,-1,-1,0,0,1,1,1,1)/\sqrt{8}$ and  $\utwi{d}_3'=(-1,-1,0,0,2,2,0,0,-1,-1)/\sqrt{12}$. 

The column constraint matrix $\bH_C$ is a $p_2 \times 3$ orthogonal matrix. For $p_2=20$, the three columns of $\bH_C$ are generated as $\bh_{c,1} = [\mathbf{1}_7 / \sqrt{7}, \mathbf{0}_7, \mathbf{0}_6]'$, $\bh_{c,2} = [\mathbf{0}_7, \mathbf{1}_7 / \sqrt{7}, \mathbf{0}_6]'$, $\bh_{c,3} = [\mathbf{0}_7, \mathbf{0}_7, \mathbf{1}_6 / \sqrt{6}]'$, where $\mathbf{0}_k$ denotes a $k$-dimensional zero row vector. The constraints represent a 3-group  classification. The 20 columns are divided into 3 groups of size 7, 7 and 6, respectively. In increasing $p_2$ to $40$ while keeping $m_2 = 3$ fixed, we double the length of each vector in the columns defined above. 

Table \ref{table:orth_k_right} shows the performance of estimating the true number of row and column factors seperately. The subscripts \textit{c} and \textit{u} denote results from the constrained model (\ref{eqn:cmfm}) and unconstrained model (\ref{eqn:mfm}), 
respectively. $f_{c}$ and $f_{u}$ denote the relative frequency of correctly estimating the true number of factors $k$. From the table, we make the following observations. First, when the 
row and column factors are strong, i.e. $(\delta_1,\delta_2) = (0,0)$, both constrained and 
unconstrained models can estimate accurately the number of factors, but the constrained models 
perform better when the sample size is small. Second, if the strength of the row factors is weak, but 
the strength of the column factors is strong, i.e. $(\delta_1,\delta_2) = (0.5,0)$, the 
unconstrained models fail to estimate the number of row factors $k_1$, but the constrained models 
continue to perform well for both $k_1$ and $k_2$. Furthermore, as expected, the performance of the constrained models 
improves with the sample size. Finally, if the strength of the row and columns factors is weak, 
i.e. $(\delta_1,\delta_2) = (0.5,0.5)$, both models encounter difficulties in estimating the 
correct number of row factors $k_1$ for the sample sizes used. However, the constrained models continue to perform well for the number of column factors $k_2$. Here $m_1$ and $m_2$ are different and play a role. Since $m_1 > m_2$, $k_2$ is estimated with higher accuracy, especially in the weak factor case.
\begin{table}[htpb!]
	\centering
	\resizebox{0.8\textwidth}{!}{%
\begin{tabular}{cccc|cc|cc|cc|cc}
\hline
\multicolumn{4}{c|}{$k_1$} & \multicolumn{2}{c|}{$T = 0.5 \, p_1 \, p_2$} & \multicolumn{2}{c|}{$T =  \, p_1 \, p_2$} & \multicolumn{2}{c|}{$T = 1.5 \, p_1 \, p_2$} & \multicolumn{2}{c}{$T = 2 \, p_1 \, p_2$} \\ \hline
$\delta_1$ & $\delta_2$ & $p_1$ & $p_2$ & $f_{u}$ & $f_{c}$ & $f_{u}$ & $f_{c}$ & $f_{u}$ & $f_{c}$ & $f_{u}$ & $f_{c}$ \\ \hline
\multirow{4}{*}{0} & \multirow{4}{*}{0} & 20 & 20 & 0.264 & 0.942 & 0.728 & 0.996 & 0.952 & 1 & 0.996 & 1 \\
 &  & 20 & 40 & 0.734 & 1 & 0.998 & 1 & 1 & 1 & 1 & 1 \\
 &  & 40 & 20 & 0.786 & 0.994 & 1 & 1 & 1 & 1 & 1 & 1 \\
 &  & 40 & 40 & 1 & 1 & 1 & 1 & 1 & 1 & 1 & 1 \\ \hline
\multirow{4}{*}{0.5} & \multirow{4}{*}{0} & 20 & 20 & 0 & 0.202 & 0 & 0.508 & 0 & 0.734 & 0 & 0.926 \\
 &  & 20 & 40 & 0 & 0.692 & 0 & 0.97 & 0 & 0.994 & 0 & 1 \\
 &  & 40 & 20 & 0 & 0.352 & 0 & 0.77 & 0 & 0.908 & 0 & 0.964 \\
 &  & 40 & 40 & 0 & 0.872 & 0 & 0.984 & 0 & 0.996 & 0 & 0.994 \\ \hline
\multirow{4}{*}{0.5} & \multirow{4}{*}{0.5} & 20 & 20 & 0 & 0.052 & 0 & 0.036 & 0 & 0.01 & 0 & 0.006 \\
 &  & 20 & 40 & 0 & 0.052 & 0 & 0.022 & 0 & 0.008 & 0 & 0.004 \\
 &  & 40 & 20 & 0 & 0.062 & 0 & 0.006 & 0 & 0 & 0 & 0.002 \\
 &  & 40 & 40 & 0 & 0.018 & 0 & 0.006 & 0 & 0.006 & 0 & 0.044 \\ \hline 
\hline
\multicolumn{4}{c|}{$k_2$} & \multicolumn{2}{c|}{$T = 0.5 \, p_1 \, p_2$} & \multicolumn{2}{c|}{$T =  \, p_1 \, p_2$} & \multicolumn{2}{c|}{$T = 1.5 \, p_1 \, p_2$} & \multicolumn{2}{c}{$T = 2 \, p_1 \, p_2$} \\ \hline
$\delta_1$ & $\delta_2$ & $p_1$ & $p_2$ & $f_{u}$ & $f_{c}$ & $f_{u}$ & $f_{c}$ & $f_{u}$ & $f_{c}$ & $f_{u}$ & $f_{c}$ \\ \hline
\multirow{4}{*}{0} & \multirow{4}{*}{0} & 20 & 20 & 1 & 1 & 1 & 1 & 1 & 1 & 1 & 1 \\
 &  & 20 & 40 & 1 & 1 & 1 & 1 & 1 & 1 & 1 & 1 \\
 &  & 40 & 20 & 1 & 1 & 1 & 1 & 1 & 1 & 1 & 1 \\
 &  & 40 & 40 & 1 & 1 & 1 & 1 & 1 & 1 & 1 & 1 \\ \hline
\multirow{4}{*}{0.5} & \multirow{4}{*}{0} & 20 & 20 & 0.988 & 0.996 & 1 & 1 & 1 & 1 & 1 & 1 \\
 &  & 20 & 40 & 1 & 1 & 1 & 1 & 1 & 1 & 1 & 1 \\
 &  & 40 & 20 & 1 & 1 & 1 & 1 & 1 & 1 & 1 & 1 \\
 &  & 40 & 40 & 1 & 1 & 1 & 1 & 1 & 1 & 1 & 1 \\ \hline
\multirow{4}{*}{0.5} & \multirow{4}{*}{0.5} & 20 & 20 & 0.02 & 0.762 & 0.104 & 0.962 & 0.272 & 0.992 & 0.58 & 1 \\
 &  & 20 & 40 & 0 & 0.956 & 0.09 & 0.998 & 0.472 & 1 & 0.85 & 1 \\
 &  & 40 & 20 & 0 & 0.952 & 0.026 & 1 & 0.196 & 1 & 0.572 & 1 \\
 &  & 40 & 40 & 0 & 1 & 0.03 & 1 & 0.438 & 1 & 0.906 & 1 \\ \hline
\end{tabular} %
}
	\caption{Relative frequencies of correctly estimating the number of row (column) factors $k_1$ ($k_2$) in the case of orthogonal constraints, where $p_i$ are the dimension, $T$ is the sample size, 
	and $f_u$ and $f_c$ denote the results of unconstrained and 
	constrained factor model, respectively. Table on top is for $k_1$, while table on the bottom is for $k_2$.}
	\label{table:orth_k_right}
\end{table}

Figure~\ref{fig:orth_Q_mean_sd} shows the box-plots of the estimation errors in estimating the loading spaces of $\bQ = \bQ_2 \otimes \bQ_1$ using the correct number of factors. 
The gray boxes are for the constrained models. 
From the plots, it is seen that when both row and column factors are strong, i.e. $(\delta_1, \delta_2)=(0, 0)$, and the number of factors is properly estimated, 
the mean and standard deviation of the estimation errors $\mathcal{D}(\widehat{\bQ}, \bQ)$ are small for both models, but the constrained model has a smaller mean estimation error. When row factors are weak, i.e. $(\delta_1, \delta_2)=(0.5, 0)$, and the true number of factors is used, 
the estimation error of constrained models remains small whereas that of the unconstrained models 
is substantially larger.  

Table \ref{table:orth_Q1Q2con_mean_sd} shows the mean and standard deviations of the estimation errors $\mathcal{D}(\widehat{\bQ}_i, \bQ_i)$ for row ($i=1$) and column ($i=2$) loading spaces separately for the constrained model (\ref{eqn:cmfm}). 
Column loading spaces are estimated with higher accuracy because the dimension of the constrained column space ($m_2$) is smaller than that of the constrained row space ($m_1$).
Intuitively, after transformation (\ref{eqn:cmfm_trans_orth}), the ratio of the effective column factor strength $\norm{\bC}^2_2$ and noise level $\norm{\bE_t}^2_2$ is larger than the ratio of the effective row factor strength $\norm{\bR}^2_2$ and noise level $\norm{\bE_t}^2_2$. 
From the table, we see that (a) the mean of estimation errors decreases, as expected, as the sample size increases and (b) the mean of estimation errors is inversely proportional to the strength of row factors. 

\begin{figure}[ht!]
	\centering
	\includegraphics[width=\linewidth,height=\textheight,keepaspectratio=true]{./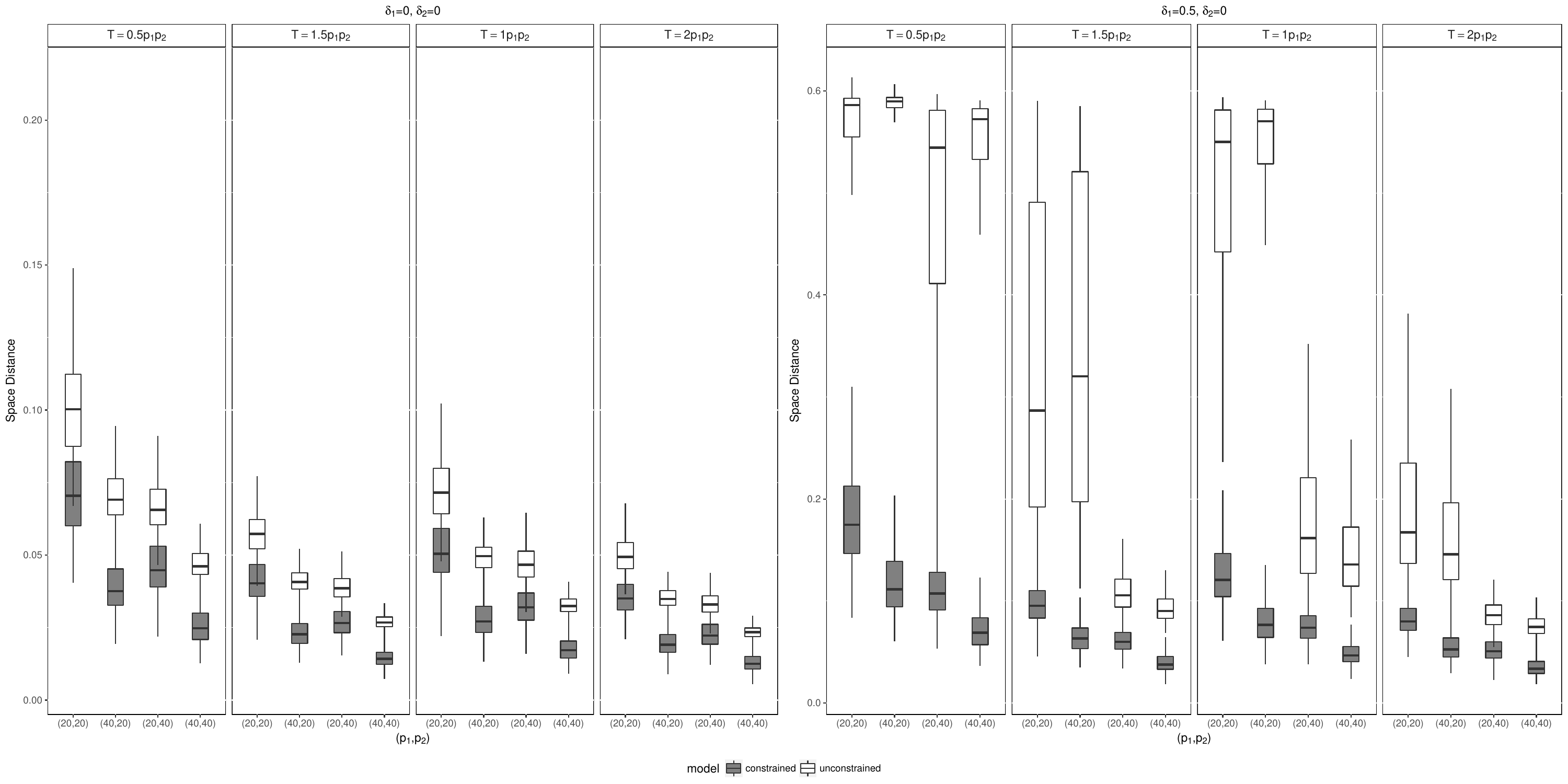}
	\caption{Box-plots of the estimation accuracy measured by $\mathcal{D}(\widehat{\bQ}, \bQ)$ of equation (\ref{eqn:space_dist_def}) for the case of orthogonal constraints. Gray boxes represent the constrained model. The results are based on $500$ iterations. See Table 17 
	in Appendix C 
	for plotted values. }
	\label{fig:orth_Q_mean_sd}
\end{figure} 

\begin{table}[ht!]
	\centering
	\resizebox{\textwidth}{!}{%
		\begin{tabular}{cccc|cc|cc|cc|cc}
			\hline
			\multicolumn{4}{c|}{} & \multicolumn{2}{c|}{$ T = 0.5 \, p_1 \, p_2 $} & \multicolumn{2}{c|}{$ T = p_1 \, p_2 $} & \multicolumn{2}{c|}{$ T = 1.5 \, p_1 \, p_2 $} & \multicolumn{2}{c}{$ T = 2 \, p_1 \, p_2 $} \\ \hline
			$\delta_1$ & $\delta_2$ & $p_1$ & $p_2$ & $\mathcal{D}( \widehat{Q}_1, Q_1 )$ & $\mathcal{D}( \widehat{Q}_2, Q_2 )$ & $\mathcal{D}( \widehat{Q}_1, Q_1 )$ & $\mathcal{D}( \widehat{Q}_2, Q_2 )$ & $\mathcal{D}( \widehat{Q}_1, Q_1 )$ & $\mathcal{D}( \widehat{Q}_2, Q_2 )$ & $\mathcal{D}( \widehat{Q}_1, Q_1 )$ & $\mathcal{D}( \widehat{Q}_2, Q_2 )$ \\ \hline
			\multirow{4}{*}{0} & \multirow{4}{*}{0} & 20 & 20 & 0.71(0.18) & 0.13(0.07) & 0.51(0.13) & 0.09(0.05) & 0.41(0.09) & 0.07(0.04) & 0.35(0.07) & 0.06(0.03) \\
			&  & 20 & 40 & 0.46(0.11) & 0.08(0.04) & 0.32(0.07) & 0.05(0.03) & 0.27(0.06) & 0.04(0.02) & 0.23(0.05) & 0.04(0.02) \\
			&  & 40 & 20 & 0.40(0.12) & 0.07(0.04) & 0.28(0.07) & 0.05(0.03) & 0.23(0.06) & 0.04(0.02) & 0.19(0.05) & 0.04(0.02) \\
			&  & 40 & 40 & 0.26(0.07) & 0.04(0.02) & 0.18(0.04) & 0.03(0.02) & 0.14(0.04) & 0.03(0.01) & 0.13(0.03) & 0.02(0.01) \\ \hline
			\multirow{4}{*}{0.5} & \multirow{4}{*}{0} & 20 & 20 & 1.84(0.75) & 0.5(0.23) & 1.23(0.35) & 0.30(0.15) & 0.95(0.23) & 0.22(0.11) & 0.81(0.18) & 0.17(0.09) \\
			&  & 20 & 40 & 1.08(0.30) & 0.26(0.13) & 0.74(0.18) & 0.15(0.08) & 0.61(0.14) & 0.12(0.06) & 0.52(0.12) & 0.10(0.05) \\
			&  & 40 & 20 & 1.18(0.45) & 0.28(0.15) & 0.78(0.23) & 0.17(0.09) & 0.64(0.18) & 0.13(0.07) & 0.54(0.14) & 0.11(0.06) \\
			&  & 40 & 40 & 0.71(0.21) & 0.14(0.08) & 0.48(0.13) & 0.09(0.05) & 0.39(0.1) & 0.07(0.04) & 0.35(0.09) & 0.06(0.03) \\ \hline
			\multirow{4}{*}{0.5} & \multirow{4}{*}{0.5} & 20 & 20 & 5.84(0.62) & 2.04(0.53) & 5.35(0.75) & 1.63(0.42) & 4.68(1.17) & 1.33(0.34) & 4.20(1.31) & 1.13(0.32) \\
			&  & 20 & 40 & 5.62(0.68) & 1.98(0.40) & 4.75(1.13) & 1.47(0.30) & 3.96(1.33) & 1.18(0.27) & 3.32(1.35) & 0.97(0.24) \\
			&  & 40 & 20 & 5.53(0.61) & 1.52(0.50) & 4.68(1.25) & 1.00(0.37) & 3.64(1.46) & 0.76(0.30) & 2.87(1.42) & 0.61(0.25) \\
			&  & 40 & 40 & 5.01(1.01) & 1.32(0.38) & 3.64(1.47) & 0.84(0.29) & 2.62(1.46) & 0.61(0.20) & 1.98(1.14) & 0.49(0.19) \\ \hline
		\end{tabular}%
	}
	\caption{Means and standard deviations (in parentheses) of the estimation accuracy measured by $\mathcal{D}(\widehat{\bQ}, \bQ)$ of equation (\ref{eqn:space_dist_def}) for constrained factor 
	models. The case of orthogonal constraints is used. The subscripts 1 and 2 denote row and 
	column, respectively.
	All numbers in the table are 10 times of the true numbers for clear presentation. The results are based on $500$ simulations. }
	\label{table:orth_Q1Q2con_mean_sd}
\end{table}

To investigate the performance of estimation under different choices of $h_0$, which is the number of lags used in (\ref{eqn:M_def}), we change the underlying generating model of $vec(\bF_t)$ to a VAR(2) process without the lag-1 term, $vec(\bF_t) = \bPhi_F vec(\bF_{t-2}) + \boldsymbol{\epsilon}_t$. Here we only consider the strong factor setting with $\delta_1 = \delta_2 = 0$ and use the sample size $T = 2 p_1 p_2$ for each combination of $p_1$ and  $p_2$. All the other parameters are the same as those in 
the prior simulation. 
Table \ref{table:AR(2)_h0_change} presents the simulation results. Since $vec(\bF_t)$, and hence $vec(\bY_t)$, has zero auto-covariance matrix at lag $1$, $\widehat{\bM}$ under $h_0=1$ contains no information on the signal, and, as expected, both the constrained and unconstrained models fail to correctly estimate the number of factors and the loading space.  On the other hand, both models 
are able to correctly estimate the number of factors when $h_0 > 1$ with the constrained model 
faring better. 
The fact that $h_0=2, 3, 4$ give similar results shows that the choice of $h_0$ is not critical to the performance of the proposed method as long as it is sufficiently large to describe the pattern of the auto-covariance matrices of the data. See Condition 2 in Appendix A. In practice, one can select $h_0$  by examining the sample 
cross-correlation matrices of $\bY_t$.     
\begin{table}[ht!]
	\centering
	\resizebox{0.8\textwidth}{!}{%
		\begin{tabular}{c|cc|cccc}
			\hline
			& $p_1$ & $p_2$ & $h_0=1$ & $h_0=2$ & $h_0=3$ & $h_0=4$ \\ \hline \hline
			\multirow{4}{*}{$f_{c}$} & 20 & 20 & 0.12 & 1.00 & 1.00 & 1.00 \\
			& 20 & 40 & 0.16 & 1.00 & 1.00 & 1.00 \\
			& 40 & 20 & 0.12 & 1.00 & 1.00 & 1.00 \\
			& 40 & 40 & 0.22 & 1.00 & 1.00 & 1.00 \\ \hline
			\multirow{4}{*}{$f_{u}$} & 20 & 20 & 0.00 & 0.89 & 0.58 & 0.43 \\
			& 20 & 40 & 0.00 & 1.00 & 1.00 & 0.95 \\
			& 40 & 20 & 0.00 & 1.00 & 1.00 & 0.97 \\
			& 40 & 40 & 0.00 & 1.00 & 1.00 & 1.00 \\ \hline \hline
			\multirow{4}{*}{$\mathcal{D}_{c}(\widehat{\bQ}, \bQ)$} & 20 & 20 & 2.83(1.13) & 0.36(0.07) & 0.37(0.07) & 0.38(0.08) \\
			& 20 & 40 & 2.69(1.15) & 0.23(0.05) & 0.23(0.05) & 0.24(0.05) \\
			& 40 & 20 & 2.54(1.21) & 0.20(0.05) & 0.20(0.05) & 0.21(0.06) \\
			& 40 & 40 & 2.31(1.17) & 0.13(0.03) & 0.13(0.03) & 0.14(0.04) \\ \hline
			\multirow{4}{*}{$\mathcal{D}_{u}(\widehat{\bQ}, \bQ)$} & 20 & 20 & 4.37(1.29) & 0.51(0.07) & 0.53(0.07) & 0.53(0.08) \\
			& 20 & 40 & 4.30(1.30) & 0.34(0.04) & 0.35(0.04) & 0.35(0.04) \\
			& 40 & 20 & 4.36(1.31) & 0.36(0.04) & 0.37(0.04) & 0.37(0.05) \\
			& 40 & 40 & 4.34(1.34) & 0.24(0.02) & 0.24(0.03) & 0.25(0.03) \\ \hline
		\end{tabular}%
	}
	\caption{Performance of estimation under different choices of $h_0$ when $vec(\bF_t) = \bPhi_{F} vec(\bF_{t-2}) + \boldsymbol{\epsilon}_t$. Metrics reported are relative frequencies of correctly estimating $k$, means and standard deviations (in parentheses) of the estimation accuracy measured by $\mathcal{D}(\widehat{\bQ}, \bQ)$. Means and standard deviations are multiplied by $10$ for ease in presentation. $f_u$ and $f_c$ denote unconstrained and constrained models. }
	\label{table:AR(2)_h0_change}
\end{table}

\subsection{Case 2. Partial Orthogonal Constraints}
In this case, the observed data $\bY_t$'s are generated using Model (\ref{eqn:pcmfm_explicit}), 
\begin{equation*}
\bY_t = \bH_R \bR_1 \bF_t \bC_1' \bH_C' + \bL_R \bR_2 \bG_t \bC_2' \bL_C' + \bU_t , \qquad t = 1, \ldots, T.
\end{equation*} 
Parameter settings of the first part $\bH_R \bR_1 \bF_t \bC_1' \bH_C'$ are the same as those in Case 1. The latent factor process $\bG_t$ is of dimension $q_1 \times q_2 = 5 \times 4$. The entries of $\bG_t$ follow $q_1q_2$ independent $AR(1)$ processes with Gaussian white noise $\mathcal{N}(0,1)$ innovations, $vec(\bG_t) = \bPhi_G \, vec(\bG_{t-1}) + \boldsymbol{\epsilon}_t$ with $\bPhi_G$ being a diagonal matrix with entries $(-0.7, 0.5, -0.2, 0.9, 0.1, 0.4, 0.6, -0.5, 0.7, 0.7, -0.4, 0.4, 0.4, -0.6, -0.6, \\ 0.6, -0.5, -0.3, 0.2, -0.4)$. The row loading matrix $\bL_R \bR_2$ is a $20 \times 5$ orthogonal matrix, satisfying $\bH_R' \bL_R = \mathbf{0}$. The column loading matrix $\bL_C \bC_2$ is a $20 \times 4$ orthogonal matrix, satisfying $\bH_C' \bL_C = \mathbf{0}$. The entries of $\bR_2$ and $\bC_2$ are random draws from the uniform distribution between $- p_i^{-\eta_i/2} \sqrt{p_i/(p_i-m_i)}$ and $p_i^{-\eta_i/2} \sqrt{p_i/(p_i-m_i)}$ for $i = 1, 2$,  respectively, so that the conditions on factor strength are satisfied. 
Factor strength is controlled by the $\delta_i$'s. 

Model (\ref{eqn:pcmfm_explicit}) could be written in the following form:
\begin{equation*}
\bY_t = (\bH_R \bR_1 \; \bL_R \bR_2) \begin{pmatrix}
                         \bF_t & 0 \\
                         0 & \bG_t
                        \end{pmatrix} \begin{pmatrix} \bC'_1 \bH'_C \\ \bC'_2 \bL'_C \end{pmatrix} + \bU_t, \qquad t = 1, \ldots, T.
\end{equation*} 
In this form, the true number of factors is $k_0 = (k_1 + r_1)(k_2+r_2)$ and the true loading matrix is $ (\bH_C \bC_1 \; \bL_C \bC_2) \otimes (\bH_R \bR_1 \; \bL_R \bR_2) $. Table \ref{table:partial_k1_right_select} shows the frequency of correctly estimating $k_0$ based on $500$ iterations. In the table, $f_{u}$ denotes the frequency of correctly estimating $k_0$ for unconstrained model. $f_{con_1}$ and $f_{con_2}$ denote the same frequency metric for the first matrix factor $\bF_t$ and second matrix factor $\bG_t$ of the constrained model. The number of factors in $\bF_t$ is estimated with a higher accuracy because the dimension of constrained loading space for $\bF_t$ is $m_1 m_2 = 36$, which is smaller than that for $\bG_t$, $(p_1-m_1)(p_2-m_2)=136$. The result again confirms the theoretical results in Section \ref{sec:theory}.  
Note that Table \ref{table:partial_k1_right_select} only contains selected combinations of factor strength parameters $\delta_i$'s ($i=1, \ldots 4$).  The results of all combinations of 
factor strength are given in Table~18 
in Appendix~C. 

Figure \ref{fig:partial_Q_mean_sd_strong} and Figure \ref{fig:partial_Q_mean_sd_weak} present box-plots of estimation errors under weak and strong factors based on $500$ simulations, respectively. Again, the results show that the constrained approach effectively improves the estimation accuracy. The performance of constrained model is good even in the case of weak factors. Moreover, with stronger signals and larger sample sizes, both approaches increase their estimation accuracy. 

\begin{table}[ht!]
	\centering
	\resizebox{0.85\textwidth}{!}{%
		\begin{tabular}{cccc|cc|ccc|ccc|ccc|ccc}
			\hline
			\multicolumn{6}{c|}{} & \multicolumn{3}{c|}{$T = 0.5 * p_1 * p_2$} & \multicolumn{3}{c|}{$T = p_1 * p_2$} & \multicolumn{3}{c|}{$T = 1.5 * p_1 * p_2$} & \multicolumn{3}{c}{$T = 2 * p_1 * p_2$} \\ \hline
			\multicolumn{1}{c}{$\delta_1$} & \multicolumn{1}{c}{$\delta_2$} & \multicolumn{1}{c}{$\delta_3$} & \multicolumn{1}{c|}{$\delta_4$} & \multicolumn{1}{c}{$p_1$} & \multicolumn{1}{c|}{$p_2$} & \multicolumn{1}{c}{$f_{u}$} & \multicolumn{1}{c}{$f_{con_1}$} & \multicolumn{1}{c|}{$f_{con_2}$} & \multicolumn{1}{c}{$f_{u}$} & \multicolumn{1}{c}{$f_{con_1}$} & \multicolumn{1}{c|}{$f_{con_2}$} & \multicolumn{1}{c}{$f_{u}$} & \multicolumn{1}{c}{$f_{con_1}$} & \multicolumn{1}{c|}{$f_{con_2}$} & \multicolumn{1}{c}{$f_{u}$} & \multicolumn{1}{c}{$f_{con_1}$} & \multicolumn{1}{c}{$f_{con_2}$} \\ \hline
			\multirow{4}{*}{0} & \multirow{4}{*}{0} & \multirow{4}{*}{0} & \multirow{4}{*}{0} & 20 & 20 & 0 & 0.94 & 0 & 0 & 1.00 & 0 & 0 & 1.00 & 0 & 0.01 & 1.00 & 0 \\
			&  &  &  & 20 & 40 & 0 & 1.00 & 0 & 0 & 1.00 & 0 & 0.03 & 1.00 & 0 & 0.19 & 1.00 & 0 \\
			&  &  &  & 40 & 20 & 0.15 & 0.99 & 1.00 & 0.81 & 1.00 & 1.00 & 0.98 & 1.00 & 1.00 & 1.00 & 1.00 & 1.00 \\
			&  &  &  & 40 & 40 & 0.71 & 1.00 & 1.00 & 1.00 & 1.00 & 1.00 & 1.00 & 1.00 & 1.00 & 1.00 & 1.00 & 1.00 \\ \hline
			\multirow{4}{*}{0} & \multirow{4}{*}{0} & \multirow{4}{*}{0.5} & \multirow{4}{*}{0} & 20 & 20 & 0 & 0.94 & 0 & 0 & 1.00 & 0 & 0 & 1.00 & 0 & 0 & 1.00 & 0 \\
			&  &  &  & 20 & 40 & 0 & 1.00 & 0 & 0 & 1.00 & 0 & 0 & 1.00 & 0 & 0 & 1.00 & 0 \\
			&  &  &  & 40 & 20 & 0 & 0.99 & 0.54 & 0 & 1.00 & 0.84 & 0 & 1.00 & 0.97 & 0 & 1.00 & 1.00 \\
			&  &  &  & 40 & 40 & 0 & 1.00 & 0.98 & 0 & 1.00 & 1.00 & 0 & 1.00 & 1.00 & 0 & 1.00 & 1.00 \\ \hline
			\multirow{4}{*}{0.5} & \multirow{4}{*}{0.5} & \multirow{4}{*}{0.5} & \multirow{4}{*}{0.5} & 20 & 20 & 0 & 0.07 & 0 & 0 & 0.04 & 0 & 0 & 0.01 & 0 & 0 & 0.01 & 0 \\
			&  &  &  & 20 & 40 & 0 & 0.07 & 0 & 0 & 0.02 & 0 & 0 & 0.01 & 0 & 0 & 0.01 & 0 \\
			&  &  &  & 40 & 20 & 0 & 0.06 & 0 & 0 & 0.01 & 0 & 0 & 0 & 0 & 0 & 0 & 0 \\
			&  &  &  & 40 & 40 & 0 & 0.06 & 0 & 0 & 0 & 0 & 0 & 0 & 0 & 0 & 0.03 & 0 \\ \hline
		\end{tabular}%
	}
	\caption{Relative frequencies of correctly estimating the number of factors for partially constrained factor models. The full table including all combinations is presented in Table~18 
	in Appendix~C. 
	}
	\label{table:partial_k1_right_select}
\end{table}

\begin{figure}[ht!]
    \centering
    \includegraphics[width=\linewidth,height=\textheight,keepaspectratio=true]{./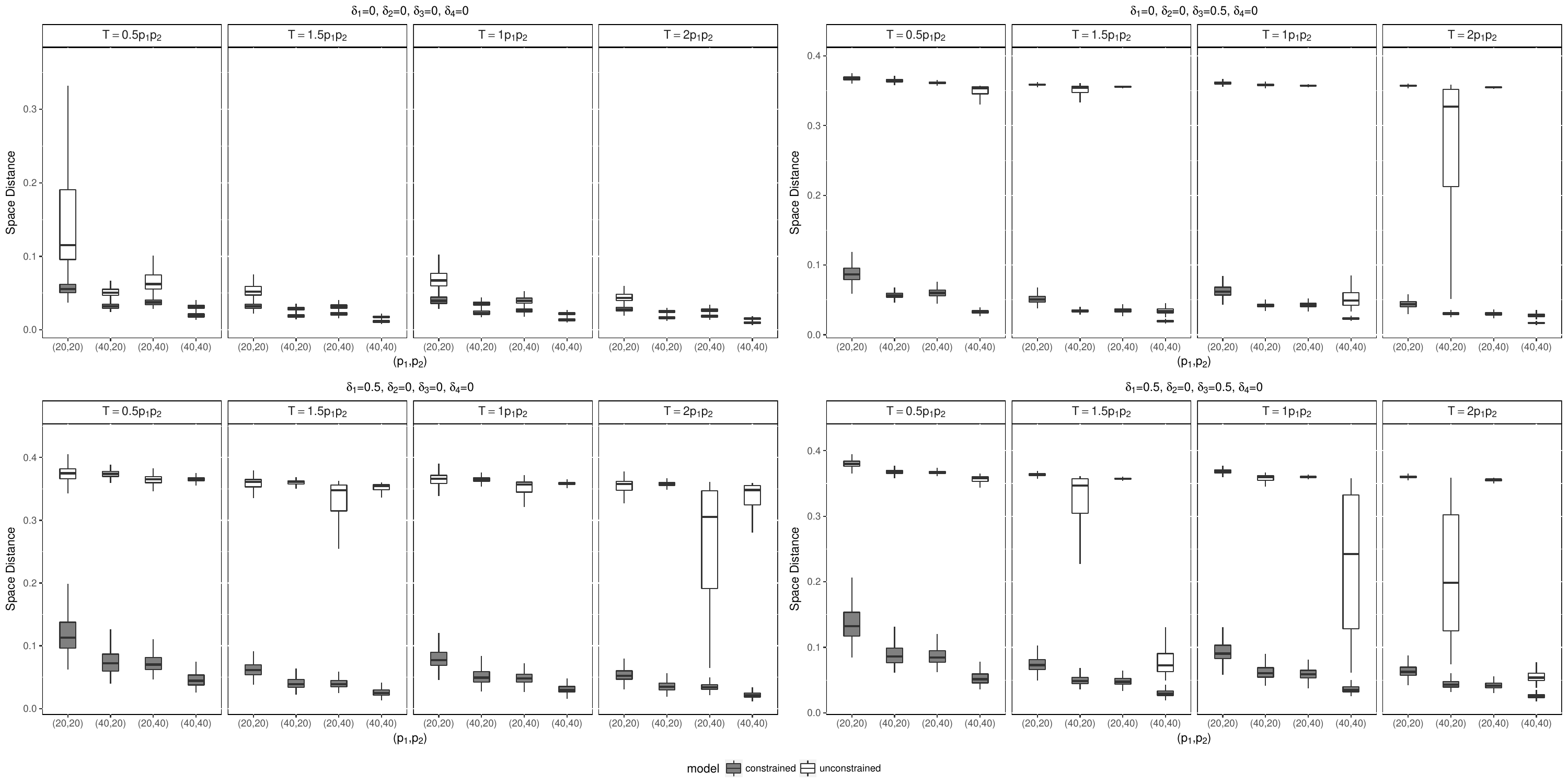}
    \caption{The strong factors case. Box-plots of the estimation accuracy measured by $\mathcal{D}(\widehat{\bQ}, \bQ)$ for partially constrained factor models.  The gray boxes are for the 
    constrained approach. The results are based on $500$ realizations. See Table~19  
    in Appendix~C 
    for the plotted values.}
    \label{fig:partial_Q_mean_sd_strong}
\end{figure}

\begin{figure}[ht!]
    \centering
    \includegraphics[width=\linewidth,height=\textheight,keepaspectratio=true]{./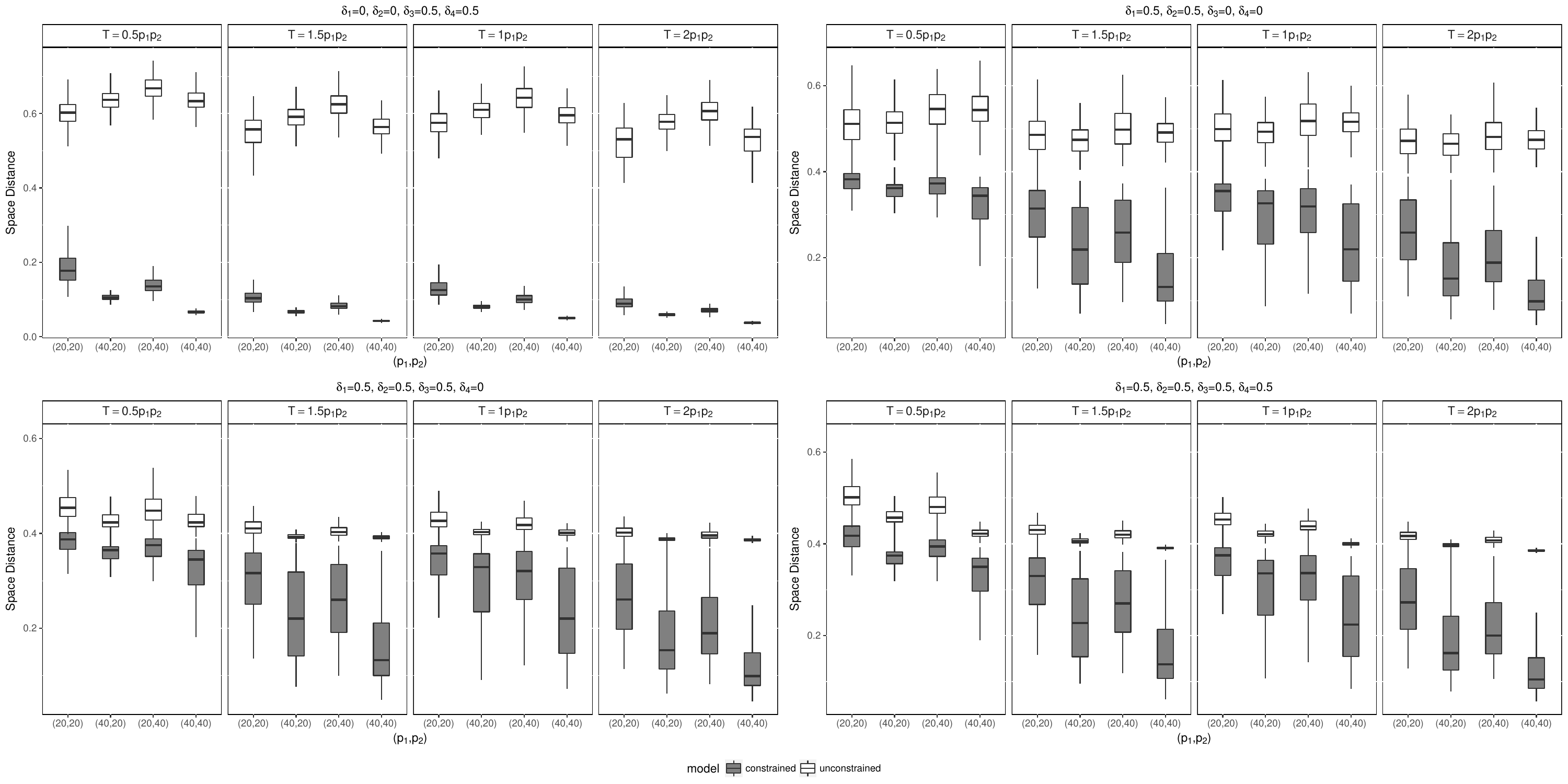}
    \caption{The weak factors case. Box-plots of the estimation accuracy measured by $\mathcal{D}(\widehat{\bQ}, \bQ)$ for partially constrained factor models. The gray boxes are 
    for the constrained approach. The results are based on $500$ realizations. See Table~19 
    in Appendix~C 
    for the plotted values.}
    \label{fig:partial_Q_mean_sd_weak}
\end{figure}

\section{Applications}
\label{sec:application}

In this section, we demonstrate the advantages of constrained matrix-variate factor models  
with three applications. In practice, the number of common factors ($k_1$, $k_2$) and the 
dimensions of constrained row and column loading spaces ($m_1$, $m_2$) must be pre-specified in order to determine an appropriate constrained factor model. The numbers of factors ($k_1$, $k_2$) can be determined by any existing methods, such as those in \cite{Lam-Yao-2012} and \cite{wang2018factor}. For any given ($k_1$, $k_2$), the dimension of constrained row and column loading spaces ($m_1$,$m_2$) can be determined by either $(a)$ prior or substantive knowledge or $(b)$ an empirical procedure. The results show that even simple grouping information can 
substantially increase the accuracy in estimation. 

\subsection{Example 1: Multinational Macroeconomic Indices}
\label{sec:application:subsec:macro_indices}

We apply the constrained and partially constrained factor models to the macroeconomic indexes collected from OECD. The dataset contains 10 quarterly macroeconomic indexes of 14 countries from 1990.Q2 to 2016.Q4  for 107 quarters. Thus, we have $T = 107$ and $p_1 \times p_2 = 14 \times 10$ matrix-valued time series. The countries include developed economies from North American, European, and Oceania. The indexes cover four major groups, namely production, consumer price, money market, and international trade.  Each original univariate time series is transformed by taking the first or second difference or logarithm to satisfy the mixing condition in \nameref{cond:Ft_alpha_mixing}. Detailed descriptions of the dataset and the transformation used are given in Table~15 
and~16 
of Appendix~B. 
Figure \ref{fig:oecd_mts_plot} shows the transformed time series of macroeconomic indicators of multiple countries. 

\begin{figure}[ht!]
    \centering
    \includegraphics[width=\linewidth,height=\textheight,keepaspectratio=true]{./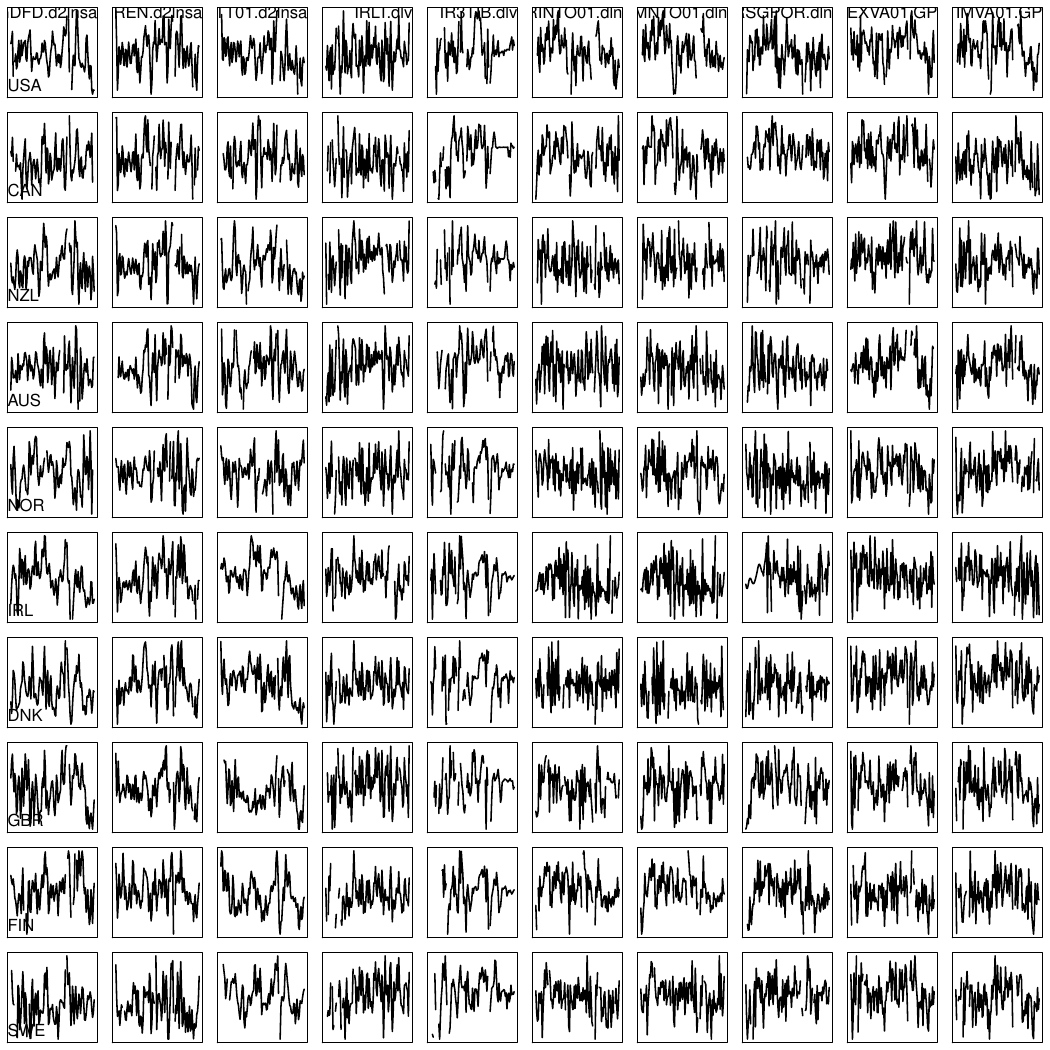}
    \caption{Time series plots of macroeconomic indicators of multiple countries (after data transformation). Only a subset of the coutries and idicators are ploted due to the space limit. Country and idicator abbreviation are given in Table~15 
    and~16 
    of Appendix C. 
    }
    \label{fig:oecd_mts_plot}
\end{figure}

We first fit an unconstrained matrix factor model and obtain estimates of the 
row loading matrix and the column loading matrix. In the row loading matrix, each row represents a country by its factor loadings, whereas, in the column loading matrix, each row represents a macroeconomic index by its factor loadings. A hierarchical clustering algorithm \citep{xu2005survey, murtagh2014ward} is employed to cluster countries and macroeconomic indices based on their representations in the common row and column factor spaces, respectively, under Euclidean distance and {\it ward.D} criterion. Figure \ref{fig:loading_matrix_clustering} shows the hierarchical clustering results. 
Based on the clustering result, we construct the row and column constraint matrices. It seems that the row constraint matrix divides countries into 6 groups: (i) United States and Canada; (ii) New Zealand and Australia; (iii) Norway; (iv) Ireland, Denmark, and United Kingdom; (v) Finland and Sweden; (vi) France, Netherlands, Austria, and Germany. The grouping more or less follows geographical partitions with Norway different from all others due to its rich oil production and other distinct economic characteristics. The column constraint matrix divides macroeconomic indexes into 5 categories: (i) GDP, production of total industry excluding construction, and production of total manufacturing ; (ii) long-term government bond yields and 3-month interbank rates and yields; (iii) total CPI and CPI of Food; (iv) CPI of Energy; (v) total exports value and total imports value in goods. Again, the grouping agrees with common economic knowledge. 

\begin{figure}[ht!]
\begin{subfigure}{0.5\textwidth}
\includegraphics[width=\linewidth,height=\textheight,keepaspectratio=true]{./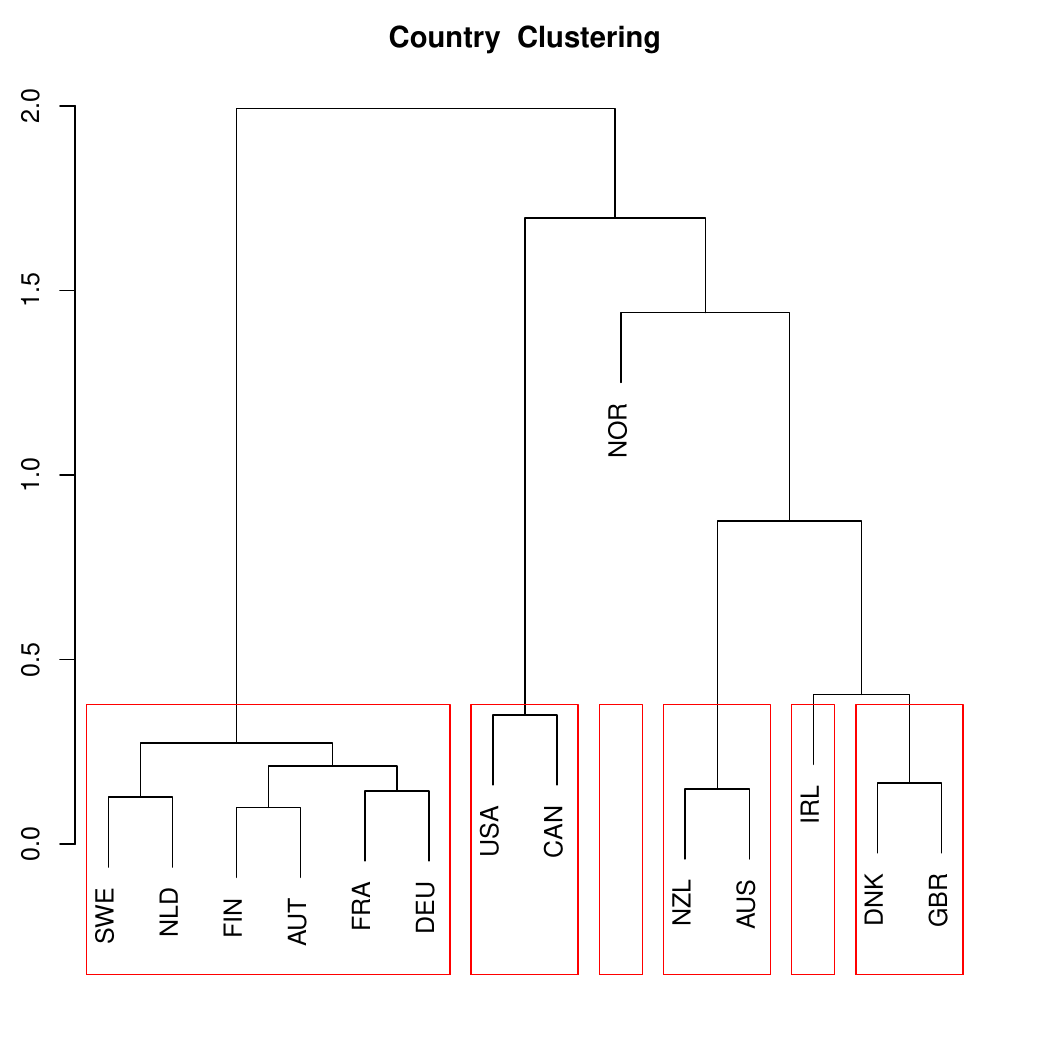} 
\caption{Country Loading Clustering}
\label{fig:country_clustering}
\end{subfigure}
\begin{subfigure}{0.5\textwidth}
\includegraphics[width=\linewidth,height=\textheight,keepaspectratio=true]{./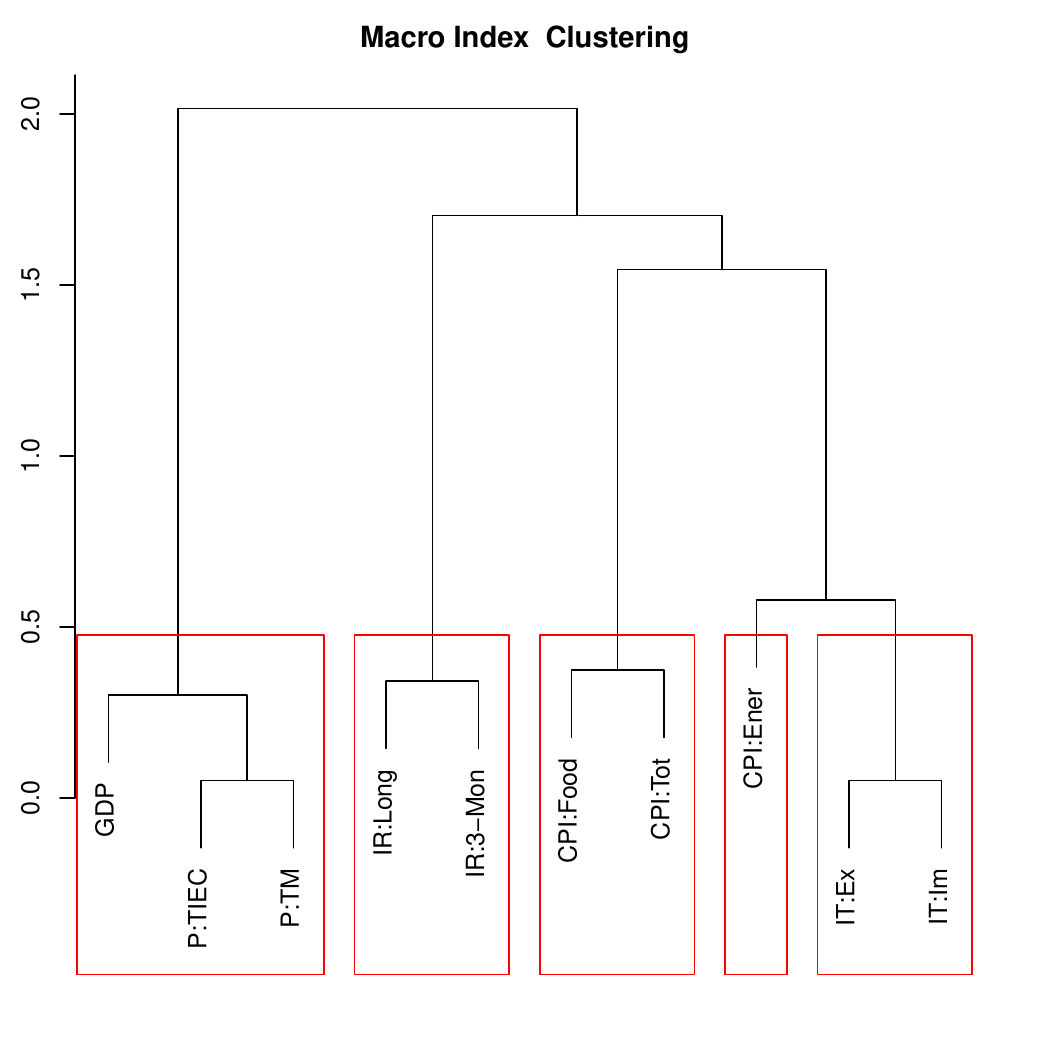}
\caption{Macroeconomic Index Loading Clustering}
\label{fig:macro_indices_clustering}
\end{subfigure}
 
\caption{Macroeconomic series: Clustering loading matrices}
\label{fig:loading_matrix_clustering}
\end{figure}

Table \ref{table:macro_Q_rot} shows estimates of the row and column loading matrices for  constrained and unconstrained $4 \times 4$ factor models. The loading matrices are normalized so that the norm of each column is one. They are also varimax-rotated to reveal a clear structure. The values shown are rounded values of the estimates multiplied by $10$ for ease in display. From the table, both the row and column loading matrices exhibit similar patterns between unconstrained and constrained models, partially validating the constraints while simplifying the analysis.

Table \ref{table:macro_Q} provides the estimates under the same setting as that of Table~\ref{table:macro_Q_rot} but without any rotation. 
From the table, it is seen that except for the first common factor of the row loading matrices there exist 
some differences in the estimated loading matrices between unconstrained and constrained factor models. 
The results of constrained models convey more clearly the following observations. 
Consider the row factors. The first row common factor represents the status of global economy 
as it is a weighted average of all the countries under study. The remaining three row common factors mark  certain differences between country groups. For the column factors, the first column common factor is dominated by the price index and interest rates; The second column common factor is mainly the production and international trade; The remaining two column common factors represent interaction between price indexes, interest rates, productions, and international trade. 

\begin{table}[ht!]
	\centering
	\resizebox{\textwidth}{!}{%
		\begin{tabular}{ccccccccccccccccc}
			\hline
			Model & Loading & Row & USA & CAN & NZL & AUS & NOR & IRL & DNK & GBR & FIN & SWE & FRA & NLD & AUT & DEU \\ \hline
			\multirow{4}{*}{$R_{unc, rot}$} & \multirow{4}{*}{$\widehat{R}'_{rot}$} & 1 & \cellcolor[HTML]{EFEFEF}7 & \cellcolor[HTML]{EFEFEF}7 & 1 & 1 & -1 & \cellcolor[HTML]{EFEFEF}-2 & \cellcolor[HTML]{EFEFEF}-1 & \cellcolor[HTML]{EFEFEF}0 & 1 & 0 & 0 & 0 & 0 & -1 \\
			&  & 2 & 0 & 1 & -2 & -1 & 1 & \cellcolor[HTML]{EFEFEF}1 & \cellcolor[HTML]{EFEFEF}1 & \cellcolor[HTML]{EFEFEF}2 & \cellcolor[HTML]{EFEFEF}4 & \cellcolor[HTML]{EFEFEF}3 & \cellcolor[HTML]{EFEFEF}4 & \cellcolor[HTML]{EFEFEF}4 & \cellcolor[HTML]{EFEFEF}4 & \cellcolor[HTML]{EFEFEF}4 \\
			&  & 3 & 2 & -1 & \cellcolor[HTML]{EFEFEF}5 & \cellcolor[HTML]{EFEFEF}5 & 1 & \cellcolor[HTML]{EFEFEF}5 & \cellcolor[HTML]{EFEFEF}3 & \cellcolor[HTML]{EFEFEF}2 & -1 & 1 & 1 & 0 & 0 & 0 \\
			&  & 4 & -1 & 1 & 1 & 2 & \cellcolor[HTML]{EFEFEF}9 & -3 & 0 & 0 & 0 & 1 & -1 & 1 & 0 & 0 \\ \cline{3-17} 
			\multirow{4}{*}{$R_{con, rot}$} & \multirow{4}{*}{$\widehat{R}_{rot}' H_R'$} & 1 & \cellcolor[HTML]{EFEFEF}6 & \cellcolor[HTML]{EFEFEF}6 & 0 & 0 & 0 & \cellcolor[HTML]{EFEFEF}2 & \cellcolor[HTML]{EFEFEF}2 & \cellcolor[HTML]{EFEFEF}2 & -1 & -1 & 0 & 0 & 0 & 0 \\
			&  & 2 & -1 & -1 & 0 & 0 & 0 & \cellcolor[HTML]{EFEFEF}3 & \cellcolor[HTML]{EFEFEF}3 & \cellcolor[HTML]{EFEFEF}3 & \cellcolor[HTML]{EFEFEF}4 & \cellcolor[HTML]{EFEFEF}4 & \cellcolor[HTML]{EFEFEF}3 & \cellcolor[HTML]{EFEFEF}3 & \cellcolor[HTML]{EFEFEF}3 & \cellcolor[HTML]{EFEFEF}3 \\
			&  & 3 & 0 & 0 & \cellcolor[HTML]{EFEFEF}7 & \cellcolor[HTML]{EFEFEF}7 & 0 & \cellcolor[HTML]{EFEFEF}1 & \cellcolor[HTML]{EFEFEF}1 & \cellcolor[HTML]{EFEFEF}1 & 1 & 1 & -1 & -1 & -1 & -1 \\
			&  & 4 & 0 & 0 & 0 & 0 & \cellcolor[HTML]{EFEFEF}10 & 0 & 0 & 0 & 1 & 1 & 0 & 0 & 0 & 0 \\ \hline
		\end{tabular}%
	} \\
	
	\vspace{0.2in}
	
	\resizebox{\textwidth}{!}{%
		\begin{tabular}{ccccccccccccc}
			\hline
			Model & Loading & Row & CPI:Food & CPI:Tot & CPI:Ener & IR:Long & IR:3-Mon & P:TIEC & P:TM & GDP & IT:Ex & IT:Im \\ \hline
			\multirow{4}{*}{$C_{unc, rot}$} & \multirow{4}{*}{$\widehat{C}'_{rot}$} & 1 & \cellcolor[HTML]{EFEFEF}6 & \cellcolor[HTML]{EFEFEF}7 & 3 & -1 & 1 & 0 & 0 & -1 & -1 & 0 \\
			&  & 2 & -2 & 1 & \cellcolor[HTML]{EFEFEF}4 & 1 & -1 & 0 & 0 & 0 & \cellcolor[HTML]{EFEFEF}6 & \cellcolor[HTML]{EFEFEF}6 \\
			&  & 3 & 0 & 0 & 1 & \cellcolor[HTML]{EFEFEF}8 & \cellcolor[HTML]{EFEFEF}6 & -1 & 0 & 1 & 0 & 0 \\
			&  & 4 & 1 & -1 & 0 & 0 & 0 & \cellcolor[HTML]{EFEFEF}6 & \cellcolor[HTML]{EFEFEF}6 & \cellcolor[HTML]{EFEFEF}5 & 0 & 0 \\ \cline{3-13} 
			\multirow{4}{*}{$C_{con, rot}$} & \multirow{4}{*}{$\widehat{C}'_{rot} H_C'$} & 1 & \cellcolor[HTML]{EFEFEF}7 & \cellcolor[HTML]{EFEFEF}7 & 0 & 0 & 0 & 0 & 0 & 0 & 0 & 0 \\
			&  & 2 & 0 & 0 & \cellcolor[HTML]{EFEFEF}6 & 0 & 0 & 0 & 0 & 0 & \cellcolor[HTML]{EFEFEF}6 & \cellcolor[HTML]{EFEFEF}6 \\
			&  & 3 & 0 & 0 & 0 & \cellcolor[HTML]{EFEFEF}7 & \cellcolor[HTML]{EFEFEF}7 & 0 & 0 & 0 & 0 & 0 \\
			&  & 4 & 0 & 0 & -2 & 0 & 0 & \cellcolor[HTML]{EFEFEF}6 & \cellcolor[HTML]{EFEFEF}6 & \cellcolor[HTML]{EFEFEF}6 & 1 & 1 \\ \hline
		\end{tabular}%
	}
	\caption{Estimations of row and column loading matrices (varimax rotated) of constrained and unconstrained matrix factor models for multinational macroeconomic indices. The loadings matrix are multiplied by $10$ and rounded to integers for ease in display.}
	\label{table:macro_Q_rot}
\end{table}

\begin{table}[ht!]
	\centering
	\resizebox{\textwidth}{!}{%
		\begin{tabular}{ccccccccccccccccc}
			\hline
			Model & Loading & Row & USA & CAN & NZL & AUS & NOR & IRL & DNK & GBR & FIN & SWE & FRA & NLD & AUT & DEU \\ \hline
			\multirow{4}{*}{$R_{unc}$} & \multirow{4}{*}{$\widehat{R}'$} & 1 & 3 & 2 & 2 & 2 & 2 & 2 & 3 & 3 & 3 & 3 & 3 & 3 & 3 & 3 \\
			&  & 2 & 4 & 2 & 5 & 5 & 1 & 0 & 1 & 0 & -3 & -1 & -2 & -2 & -2 & -3 \\
			&  & 3 & 3 & 6 & -2 & -2 & 4 & -5 & -3 & -1 & 1 & 0 & -1 & 1 & 0 & 0 \\
			&  & 4 & -4 & -3 & 0 & 2 & 8 & -1 & 1 & 0 & -1 & 1 & 0 & 1 & 0 & 0 \\ \cline{3-17} 
			\multirow{4}{*}{$R_{con}$} & \multirow{4}{*}{$\widehat{R}' H_R'$} & 1 & 1 & 1 & 2 & 2 & 2 & 3 & 3 & 3 & 4 & 4 & 3 & 3 & 3 & 3 \\
			&  & 2 & 5 & 5 & 3 & 3 & 4 & 0 & 0 & 0 & -2 & -2 & -2 & -2 & -2 & -2 \\
			&  & 3 & -1 & -1 & 5 & 5 & -6 & 0 & 0 & 0 & 0 & 0 & -1 & -1 & -1 & -1 \\
			&  & 4 & -4 & -4 & 3 & 3 & 6 & -2 & -2 & -2 & 1 & 1 & -1 & -1 & -1 & -1 \\ \hline
		\end{tabular}%
	} \\
	
	\vspace{0.2in}
	
	\resizebox{\textwidth}{!}{%
		\begin{tabular}{ccccccccccccc}
			\hline
			Model & Loading & Row & CPI:Food & CPI:Ener & CPI:Tot & IR:Long & IR:3-Mon & P:TIEC & P:TM & GDP & IT:Ex & IT:Im \\ \hline
			\multirow{4}{*}{$C_{unc}$} & \multirow{4}{*}{$\widehat{C}'$} & 1 & 1 & 4 & 2 & 4 & 3 & 3 & 3 & 3 & 4 & 4 \\
			&  & 2 & 5 & 3 & 6 & -1 & 1 & -3 & -4 & -4 & 0 & 0 \\
			&  & 3 & 5 & -1 & 2 & -1 & 1 & 4 & 4 & 3 & -4 & -4 \\
			&  & 4 & 0 & -1 & -2 & 7 & 5 & -2 & -2 & 0 & -3 & -3 \\ \cline{3-13} 
			\multirow{4}{*}{$C_{con}$} & \multirow{4}{*}{$\widehat{C}' H_C'$} & 1 & 6 & -2 & 6 & 4 & 4 & 0 & 0 & 0 & -2 & -2 \\
			&  & 2 & 0 & 0 & 0 & 3 & 3 & 5 & 5 & 5 & 3 & 3 \\
			&  & 3 & -3 & 3 & -3 & 5 & 5 & -3 & -3 & -3 & 1 & 1 \\
			&  & 4 & 3 & 5 & 3 & -1 & -1 & -2 & -2 & -2 & 5 & 5 \\ \hline
		\end{tabular}%
	}
	\caption{Estimations of row and column loading matrices of constrained and unconstrained matrix factor models for multinational macroeconomic indices. No rotation is used. 
	The loadings matrix are multiplied by $10$ and rounded 
	to integers for ease in display.}
	\label{table:macro_Q}
\end{table}

Table \ref{table:macro_indices_ss_outofsample_10CV} compares the out-of-sample performance of unconstrained, constrained, and partially constrained factor models using a 10-fold cross validation (CV) for models with different number of factors.
We divide the entire time span into 10 sections and choose each of them as testing data. With time series data, the training data may contain two disconnected time spans. We calculate two matrices $\widehat{\bM}_1$ and $\widehat{\bM}_2$ according to (\ref{eqn:Mhat_def}) for the two disconnected time spans separately. The matrix $\widehat{\bM}$ is re-defined as the sum of $\widehat{\bM}_1$ and $\widehat{\bM}_2$. Loading matrices and latent dimensions are estimated from this newly defined $\widehat{\bM}$ with procedures in Section \ref{sec:estimation}.  
Residual sum of squares (RSS), their ratios to the total sum of squares (RSS/TSS), and the number of parameters are the average of the 10-fold CV. Clearly, the constrained factor model uses far fewer parameters in the loading matrices yet achieves slightly better results than the   
unconstrained model. Using the same number of parameters, the partially constrained model
is able to reduce markedly the RSS over the unconstrained model. 

In this particular application, the constrained matrix factor model with the specified constraint matrices seems appropriate and plausible. If incorrect structures (constraint matrices) are imposed on the model, then the constrained model may become inappropriate. As we can see from the next example, a single orthogonal constraint actually hurts the performance. In cases like this, we need a second or a third constraint to achieve satisfactory performance. Nevertheless, the results from the constrained model are better than those from the unconstrained model.  

\begin{table}[ht!]
\centering
\resizebox{0.7\textwidth}{!}{%
\begin{tabular}{cccccc}
	\hline
	\rowcolor[HTML]{EFEFEF} 
	Model & \# Factor 1 & \# Factor 2 & RSS & RSS/TSS & \# Parameters \\ \hline
	Full & (6,5) &  & 570.50 & 0.449 & 134 \\
	Constrained & (6,5) &  & 560.31 & 0.442 & 61 \\
	\rowcolor[HTML]{EFEFEF} 
	Partial & (6,5) & (6,5) & 454.41 & 0.358 & 134 \\ \hline
	Full & (5,5) &  & 613.26 & 0.482 & 120 \\
	Constrained & (5,5) &  & 604.63 & 0.477 & 55 \\
	\rowcolor[HTML]{EFEFEF} 
	Partial & (5,5) & (5,5) & 516.27 & 0.407 & 120 \\ \hline
	Full & (4,5) &  & 658.15 & 0.517 & 106 \\
	Constrained & (4,5) &  & 649.85 & 0.512 & 49 \\
	\rowcolor[HTML]{EFEFEF} 
	Partial & (4,5) & (4,5) & 576.94 & 0.454 & 106 \\ \hline
	Full & (4,4) &  & 729.46 & 0.573 & 96 \\
	Constrained & (4,4) &  & 721.96 & 0.568 & 44 \\
	\rowcolor[HTML]{EFEFEF} 
	Partial & (4,4) & (4,4) & 657.13 & 0.517 & 96 \\ \hline
	Full & (3,4) &  & 787.80 & 0.620 & 82 \\
	Constrained & (3,4) &  & 768.64 & 0.605 & 38 \\
	\rowcolor[HTML]{EFEFEF} 
	Partial & (3,4) & (3,4) & 719.46 & 0.567 & 82 \\ \hline
	Full & (3,3) &  & 868.43 & 0.684 & 72 \\
	Constrained & (3,3) &  & 852.76 & 0.671 & 33 \\
	\rowcolor[HTML]{EFEFEF} 
	Partial & (3,3) & (3,3) & 813.16 & 0.640 & 72 \\ \hline
\end{tabular}%
}
\caption{Results of 10-fold CV of out-of-sample performance for the multinational macroeconomic 
indexes. The numbers shown are average over the cross validation, where RSS and TSS stand for 
residual and total sum of squares, respectively.}
\label{table:macro_indices_ss_outofsample_10CV}
\end{table}

\subsection{Example 2: Company Financial Measurements}
\label{sec:application:subsec:Company Financials}

In this application, we investigate the constrained matrix-variate factor models for the time series of 16 quarterly financial measurements of 200 companies from 2006.Q1 to 2015.Q4 for 40 observations. Appendix~D 
contains the descriptions of variables used and their definitions, the 200 companies and their corresponding industry group and sector information. 
Data are arranged in matrix-variate time series format. At each $t$, we observe a $16 \times 200$ matrix, whose rows represent financial variables and columns represent companies. Thus we have $T=40$, $p_1=16$ and $p_2=200$. The total number of time series is $3,200$. Following the convention in eigenanalysis, we standardize the individual series before applying factor analysis.  
This data set was used in \cite{wang2018factor} for an unconstrained matrix factor model. 

The column constraint matrix $\bH_C$ is constructed based on the industrial 
classification of Bloomberg. 
The $200$ companies are classified into $51$ industrial groups, such as biotechnology, oil \& gas, computer,  among others. Thus the dimension of $\bH_C$ is $200 \times 51$. Since we do not have adequate prior knowledge on corporate financial, we do not impose any constraint on the row loading matrix. Thus, in this application, we use $\bH_R = \bI_{16}$. 

We apply the unconstrained model (\ref{eqn:mfm}), the orthogonal constrained model (\ref{eqn:cmfm_trans_orth}), and the partial constrained model (\ref{eqn:pcmfm_explicit}) 
to the data.
Table \ref{table:copfin_ss_outofsample_10CV} shows the average residual sum of squares (RSS) and their ratios to the total sum of squares (TSS) from a 10-fold CV for models with different number of factors. 
Again, it is clear, from the table, that the constrained matrix factor models use fewer numbers of parameters in loading matrices and achieve similar results. If we use the same number of parameters in the loading matrices, variances explained by the constrained matrix factor models are much larger than those of the unconstrained ones, indicating the impact of over-parameterization.  
This application with $3,200$ time series is typical in high-dimensional time series analysis. 
The number of parameters involved is usually huge in a unconstrained model. Via the example, we showed that constrained matrix factor models can substantially reduce the number of parameters while keep the same explanation power.  

\begin{table}[ht!]
\centering
\resizebox{0.7\textwidth}{!}{%
\begin{tabular}{cccccc}
\hline
Model                                             & \# Factor 1 & \# Factor 2 & RSS     & RSS/SST & \# parameters \\ \hline
\rowcolor[HTML]{EFEFEF}
                                                  & (4,10)      &             & 8140.32 & 0.869   & 2064          \\
                                                  & (4,12)      &             & 7990.04 & 0.853   & 2464          \\
\multirow{-3}{*}{Full}                            & (4,19)      &             & 7587.11 & 0.810   & 3864          \\
\rowcolor[HTML]{EFEFEF}
Constrained                                       & (4,10)      &             & 8062.63 & 0.861   & 574           \\ 
                                                  & (4,10)      & (4,2)       & 7969.83 & 0.851   & 936          \\ 
\multirow{-2}{*}{Partial}                         & (4,10)      & (4,9)       & 7623.25 & 0.814   & 1979         \\ \hline
\rowcolor[HTML]{EFEFEF}
                                                  & (4, 20)     &             & 7539.68 & 0.805   & 4064          \\
                                                  & (4, 27)     &             & 7261.49 & 0.775   & 5464          \\
\multirow{-3}{*}{Full}                            & (4, 39)     &             & 6872.18 & 0.734   & 7864          \\
\rowcolor[HTML]{EFEFEF}
Constrained                                       & (4, 20)     &             & 7646.70 & 0.816   & 1084          \\ 
                                                  & (4, 20)     & (4,7)       & 7292.06 & 0.779   & 2191          \\
\multirow{-2}{*}{Partial}                         & (4, 20)     & (4,19)      & 6815.96 & 0.728   & 3979         \\ \hline
\rowcolor[HTML]{EFEFEF}
                                                  & (5,10)      &             & 8012.10 & 0.855   & 2080          \\
                                                  & (5,12)      &             & 7849.34 & 0.838   & 2480          \\
\multirow{-3}{*}{Full}                            & (5,19)      &             & 7420.04 & 0.792   & 3880          \\
\rowcolor[HTML]{EFEFEF}
Constrained                                       & (5,10)      &             & 7942.95 & 0.848   & 590           \\
                                                  & (5,10)      & (5,2)       & 7849.40 & 0.838   & 968          \\ 
\multirow{-2}{*}{Partial}                         & (5,10)      & (5,9)       & 7472.10 & 0.798   & 2011          \\ \hline
\rowcolor[HTML]{EFEFEF}
                                                  & (5,20)      &             & 7368.63 & 0.787   & 7960          \\
                                                  & (5,23)      &             & 7250.73 & 0.774   & 4680          \\
\multirow{-3}{*}{Full}                            & (5,39)      &             & 6641.13 & 0.709   & 7880          \\
\rowcolor[HTML]{EFEFEF}
Constrained                                       & (5,20)      &             & 7489.20 & 0.800   & 1100          \\ 
                                                  & (5,20)      & (5,3)       & 7357.80 & 0.786   & 1627          \\
\multirow{-2}{*}{Partial}                         & (5,20)      & (5,19)      & 6595.03 & 0.704   & 4011          \\ \hline
\rowcolor[HTML]{EFEFEF}
                                                  & (5,30)      &             & 6960.70 & 0.743   & 6080          \\
                                                  & (5,34)      &             & 6813.93 & 0.727   & 6880          \\
\multirow{-3}{*}{Full}                            & (5,59)      &             & 5988.15 & 0.639   & 11880         \\
\rowcolor[HTML]{EFEFEF}
Constrained                                       & (5,30)      &             & 7184.53 & 0.767   & 1610          \\
                                                  & (5,30)      & (5,4)       & 6997.21 & 0.747   & 2286          \\ 
\multirow{-2}{*}{Partial}                         & (5,30)      & (5,29)      & 5936.64 & 0.634   & 6011          \\ \hline
\end{tabular} }
\caption{Summary of 10-fold CV of out-of-sample analysis for the 16 corporate financial measurements for each of 200 companies. 
The numbers shown are average over the cross validation and 
RSS and TSS denote, respectively, the residual and total sum of squares.}
\label{table:copfin_ss_outofsample_10CV}
\end{table}

\subsection{Example 3: Fama-French 10 by 10 Series}
\label{sec:application:subsec:Fama-French}

Finally, we investigate constrained matrix-variate factor models for the monthly market-adjusted return series of Fama-French $10 \times 10$ portfolios from January 1964 to December 2015 for 
624 months and overall $62,400$ observations. The portfolios are the intersections of 10 portfolios formed by size (market equity, ME) and 10 portfolios formed by the ratio of book equity to market equity (BE/ME). Thus, we have $T=624$ and $p_1 \times p_2 = 10 \times 10$ matrix-variate 
time series. The series are constructed by subtracting the monthly excess market returns from each of the original portfolio returns obtained from \cite{FFdata}, 
so they are free of the market impact. 

Using an unconstrained matrix factor model, \cite{wang2018factor} carried out 
a clustering analysis on the ME and BE/ME loading matrices after rotation. Their results suggest $\bH_R = [\utwi{h}_{R_1}, \utwi{h}_{R_2}, \utwi{h}_{R_3}]$, where $\utwi{h}_{R_1} = [\mathbf{1}(5)/\sqrt{5}, \mathbf{0}(5)]$, $\utwi{h}_{R_2} = [\mathbf{0}(5), \mathbf{1}(4)/2, 0]$, and $\utwi{h}_{R_3} = [\mathbf{0}(9), 1]$. Therefore, ME factors are classified into three groups of smallest $5$ ME's, middle $4$ ME's, and the largest ME, respectively. For cases when we need 4 row constraints, we redefine $\utwi{h}_{R_2}=[\mathbf{0}(5), \mathbf{1}(3)/\sqrt{3}, \mathbf{0}(2)]$ and add a fourth column $\utwi{h}_{R_4} = [\mathbf{0}(8),1,0]$. For column constraints, $\bH_C = [\utwi{h}_{C_1}, \utwi{h}_{C_2}, \utwi{h}_{C_3}]$, where $\utwi{h}_{C_1} = [1, \mathbf{0}(9)]$, $\utwi{h}_{C_2} = [0, \mathbf{1}(3)/\sqrt{3}, \mathbf{0}(6)]$, $\utwi{h}_{C_3} = [\mathbf{0}(4), \mathbf{1}(6)]$. 
Therefore, BE/ME factors are divided into three groups of the smallest BE/ME's, middle $3$ BE/ME's, and the $6$ largest BE/ME, respectively. For cases when we need 4 column constraints, we redefine $\utwi{h}_{C_3} = [\mathbf{0}(4), \mathbf{1}(4)/2, \mathbf{0}(2)]$ and add a fourth column $\utwi{h}_{C_4} = [\mathbf{0}(8), \mathbf{1}(2)]$.

Table \ref{table:famafrench_Q_rot} shows the estimates of the loading matrices for the constrained and unconstrained $2 \times 2$ factor models. The loading matrices are varimax-rotated for ease in  interpretation and normalized so that the norm of each column is one. From the table, the loading matrices exhibit similar patterns, but those of the constrained model convey the following observations more clearly. Consider the row factors. The first factor represents the difference between the average of the $5$ smallest ME group and the weighted average of the remaining portfolio whereas the second factor is mainly the average of the medium 
4 ME portfolios. For the column loading matrix, the first factor is a weighted average of 
the smallest BE/ME portfolio and the middle three portfolios. 
The second factor marks the difference between the smallest BE/ME portfolio 
from a weighted average of the two remaining groups. 
Finally, it is interesting to see that the constrained model uses only 16 parameters, yet 
it can reveal information similar to the unconstrained model that employs 
40 parameters. This result demonstrates the power of using constrained factor models. 


\begin{table}[ht!]
	\centering
	\resizebox{\textwidth}{!}{%
		\begin{tabular}{ccccccccccccc}
			\hline
			Model & Loading & Column & \multicolumn{10}{c}{Rotated Estimated Loadings} \\ \hline
			\multirow{4}{*}{$R_{u}$} & \multirow{2}{*}{$\widehat{R}'$} & 1 & 0.43 & 0.46 & 0.44 & 0.43 & 0.33 & 0.16 & 0.05 & -0.02 & -0.20 & -0.23 \\
			&  & 2 & -0.01 & -0.01 & -0.05 & 0.09 & 0.18 & 0.39 & 0.39 & 0.62 & 0.51 & 0.16 \\ \cline{2-13} 
			& \multirow{2}{*}{$\widehat{R}' H_R'$} & 1 & 0.44 & 0.44 & 0.44 & 0.44 & 0.44 & -0.04 & -0.04 & -0.04 & -0.04 & -0.15 \\
			&  & 2 & 0.04 & 0.04 & 0.04 & 0.04 & 0.04 & 0.50 & 0.50 & 0.50 & 0.50 & 0.06 \\ \hline 
			\multirow{4}{*}{$C_{u}$} & \multirow{2}{*}{$\widehat{C}'$} & 1 & 0.70 & 0.48 & 0.37 & 0.30 & 0.14 & 0.07 & 0.05 & -0.05 & -0.09 & 0.15 \\
			&  & 2 & 0.29 & -0.07 & -0.10 & -0.23 & -0.30 & -0.32 & -0.34 & -0.44 & -0.48 & -0.34 \\ \cline{2-13} 
			& \multirow{2}{*}{$\widehat{C}' H_C' $} & 1 & 0.78 & 0.36 & 0.36 & 0.36 & 0 & 0 & 0 & 0 & 0 & 0 \\
			&  & 2 & 0.24 & -0.18 & -0.18 & -0.18 & -0.37 & -0.37 & -0.37 & -0.37 & -0.37 & -0.37 \\ \hline
		\end{tabular}%
	}
	\caption{Estimates of the loading matrices of constrained and unconstrained matrix factor modes for Fama-French $10 \times 10$ portfolio returns. The loading matrices are varimax rotated and normalized for ease in comparison.}
	\label{table:famafrench_Q_rot}
\end{table}

Table \ref{table:famafrench_ss_outofsample_10CV} compares the out-of-sample performance of unconstrained and constrained matrix factor models using a 10-fold CV for models with different number of factors constructed similarly to that of Table \ref{table:macro_indices_ss_outofsample_10CV}. In this case, the prediction RSS of the constrained model is slightly larger than that of the unconstrained one with the same number of factors, which may results from the misspecification of the constrained matrices. Testing the adequacy of the constrained matrix is an important research topic to be addressed in future research. On the other hand, the constrained model uses 
a much smaller number of parameters than the unconstrained model.  
\begin{table}[ht!]
	\centering
	\resizebox{0.7\textwidth}{!}{%
		\begin{tabular}{cccccc}
			\hline 
			Model & \# Factor 1 & \# Factor 2 & RSS & RSS/SST & \# Parameters \\ \hline
            \rowcolor[HTML]{EFEFEF}
			 & (3,3) &  & 3064.40 & 0.500 & 60 \\
             & (3,4) &  & 2905.79 & 0.474 & 70 \\
            \multirow{-3}{*}{Full} & (3,6) &  & 2644.59 & 0.431 & 90 \\
            \rowcolor[HTML]{EFEFEF}
			Constrained & (3,3) &  & 3115.16 & 0.508 & 24 \\			 
			& (3,3) & (3,3) & 2819.06 & 0.460 & 60 \\
			\multirow{-2}{*}{Partial} & (3,3) & (1,1) & 3079.79 & 0.502 & 36 \\ \hline
            \rowcolor[HTML]{EFEFEF}
			& (3,2) &  & 3316.55 & 0.541 & 50 \\
            \multirow{-2}{*}{Full}& (3,4) &  & 2905.79 & 0.474 & 70 \\
            \rowcolor[HTML]{EFEFEF}
			Constrained & (3,2) &  & 3361.03 & 0.548 & 18 \\ 
			& (3,2) & (3,2) & 3169.79 & 0.517 & 50 \\ 
			\multirow{-2}{*}{Partial} & (3,2) & (1,1) & 3323.25 & 0.542 & 31 \\ \hline
            \rowcolor[HTML]{EFEFEF}
			& (2,3) &  & 3269.50 & 0.533 & 50 \\
            & (2,4) &  & 3152.63 & 0.514 & 60 \\
            \multirow{-3}{*}{Full}& (2,6) &  & 2976.18 & 0.431 & 90 \\
            \rowcolor[HTML]{EFEFEF}
			Constrained & (2,3) &  & 3372.79 & 0.550 & 18 \\ 
			& (2,3) & (2,3) & 3154.36 & 0.514 & 50 \\ 
			\multirow{-2}{*}{Partial} & (2,3) & (1,2) & 3296.73 & 0.538 & 37 \\ \hline
            \rowcolor[HTML]{EFEFEF}
			& (2,2) &  & 3473.32 & 0.567 & 40 \\
            & (2,3) &  & 3269.50 & 0.533 & 50 \\
            \multirow{-3}{*}{Full} & (2,4) &  & 3152.63 & 0.514 & 60 \\
            \rowcolor[HTML]{EFEFEF}
			Constrained & (2,2) &  & 3535.56 & 0.577 & 16 \\
			& (2,2) & (2,2) & 3415.25 & 0.557 & 40 \\ 
			\multirow{-2}{*}{Partial} & (2,2) & (2,1) & 3486.15 & 0.569 & 33 \\ \hline
		\end{tabular}%
	}
\caption{Performance of out-of-sample  10-fold CV of  constrained and unconstrained factor 
models using Fama-French $10 \times 10$ portfolio return series, where $RSS$ and $RSS/TSS$ 
denote, respectively, the residual and total sum of squares.}
\label{table:famafrench_ss_outofsample_10CV} 
\end{table}

\section{Summary and Discussion} \label{sec:summary}

This paper established a general framework for incorporating domain or prior knowledge induced linear constraints in the matrix factor model. We developed efficient estimation procedures for
constrained, multi-term, and partially constrained matrix factor models. Constraints can be used to achieve parsimony in parameterization, to facilitate factor interpretation, and to target specific factors indicated by the domain theories. We derived  asymptotic theorems justifying the benefits of imposing constraints. Simulation results confirmed
the advantages of constrained matrix factor model over the unconstrained one in finite samples.
Finally, we illustrated 
the applications of constrained matrix factor models with three real data sets, 
where the constrained  factor models outperform their unconstrained counterparts in explaining the 
variabilities of the data using out-of-sample  $10$-fold cross validation and in factor interpretation. 

Under the model setting we adopt, both strong and weak factors exist in the dynamic component. The proposed constrained model incorporates prior information and improves the rates of convergence in the case of weak factors. For the strong factor case, it achieves the same asymptotic rates as those of the unconstrained models. Yet it entails smaller number of parameters and requires weaker assumption on the growth rates of dimensions and sample size. Several interesting topics are open for further researches. Firstly, a natural question is how we know the existence of weak factors in real data. \cite{Lam-Yao-2012} utilized a two-step approach to facilitate the discovery of weak factors that may be masked by strong factors. They run a second decomposition to the residual from the first step to find the weak factors that may be masked from the strong factor in the first step. A possible method to test the existence of weak factor will be to test the existence of common factors in the second step. Also, data containing sub-panels or block structure is a common situation where weak factors arise. \cite{hallin2011dynamic} developed method to identify and estimate joint and block-specific common factors among different panels. Similar result can be achieved by using constrained vector factor model in \cite{Tsai-Tsay-2010}. Data containing sub-panels can also be cast into matrix observations by putting sub-panels as columns. The column spaces of loadings can be divided into subspaces that correspond to the joint and block-specific common factors. However, more sophisticated estimation procedures need to be developed to exclude overlap of the column spaces. The constrained matrix factor model provides building blocks for future research on combining constraints to represent different structures and on devising estimation procedures. 

\clearpage
\bibliographystyle{agsm}
\bibliography{cmfm}

\end{document}